\newtheorem{theorem}{Theorem}[section]
\newtheorem{lemma}[theorem]{Lemma}
\newcommand{\E}{\mathbb{E}}
\newcommand{\eps}{\varepsilon}
\newcommand{\indic}{\mathbbm{1}}
\newcommand{\dcomp}{\;\widehat{<}\;}
\newcommand{\U}{\mathcal{U}}
\newcommand{\queue}{\mathcal{Q}}
\newcommand{\opFM}{\operatorname{Find Min}}
\newcommand{\opEM}{\operatorname{Extract Min}}
\newcommand{\opIS}{\operatorname{Insert}}
\newcommand{\opDK}{\operatorname{Decrease Key}}
\newcommand{\expSearch}{\operatorname{Exp Search Insertion}}
\newcommand{\search}{\operatorname{Search}}
\newcommand{\pprev}{\widehat{\operatorname{Pred}}} 
\newcommand{\prev}{\operatorname{Prev}}
\newcommand{\next}{\operatorname{Next}}
\newcommand{\rank}{r} 
\newcommand{\prank}{\widehat{r}} 
\newcommand{\Rank}{R} 
\newcommand{\pRank}{\widehat{R}}
\newcommand{\errc}{\eta} 
\newcommand{\erra}{\Vec{\eta}} 
\newcommand{\errp}{\eta^\Delta} 
\newcommand{\height}{h}
\newcommand{\nil}{\operatorname{NIL}}
\newcommand{\veb}{\mathcal{T}}
\newcommand{\codecomment}[1]{\hfill \triangleright \; \textrm{#1}}
\title{Learning-Augmented Priority Queues}
\author{%
\begin{tabular}{c} Ziyad Benomar \\ ENSAE, Ecole polytechnique,\\
Fairplay joint team, Paris\\ 
ziyad.benomar@ensae.fr
\end{tabular} \and
\begin{tabular}{c}
Christian Coester\\
Department of Computer Science \\
University of Oxford\\ 
christian.coester@cs.ox.ac.uk
\end{tabular} }
\date{}
\begin{document}

\maketitle

\begin{abstract}
Priority queues are one of the most fundamental and widely used data structures in computer science. Their primary objective is to efficiently support the insertion of new elements with assigned priorities and the extraction of the highest priority element.  
In this study, we investigate the design of priority queues within the learning-augmented framework, where algorithms use potentially inaccurate predictions to enhance their worst-case performance.
We examine three prediction models spanning different use cases, and we show how the predictions can be leveraged to enhance the performance of priority queue operations. Moreover, we demonstrate the optimality of our solution and discuss some possible applications.
\end{abstract}





\section{Introduction}\label{sec:intro}

Priority queues are an essential abstract data type in computer science \citep{jaiswal1968priority, brodal2013survey} whose objective is to enable the swift insertion of new elements and access or deletion of the highest priority element. 
Other operations can also be needed depending on the use case, such as changing the priority of an item or merging two priority queues. 
Their applications span a wide range of problems within computer science and beyond. They play a crucial role in sorting \citep{williams1964algorithm, thorup2007equivalence}, in various graph algorithms such as Dijkstra's shortest path algorithm \citep{chen2007priority} or computing minimum spanning trees \citep{chazelle2000minimum}, in operating systems for scheduling and load balancing \citep{sharma2022priority}, in networking protocols for managing data transmission packets \citep{moon2000scalable}, in discrete simulations for efficient event processing based on occurrence time \citep{goh2004dsplay}, and in implementing hierarchical clustering algorithms \citep{day1984efficient,olson1995parallel}.

Various data structures can be used to implement priority queues, each offering distinct advantages and tradeoffs \citep{brodal2013survey}. However, it is established that a priority queue with 
$n$ elements cannot guarantee $o(\log n)$ time for all the required operations \citep{brodal1996optimal}. 
This limitation can be surpassed within the learning augmented framework \citep{mitzenmacher2022algorithms}, where the algorithms can benefit from machine-learned or expert advice to improve their worst-case performance. We propose in this work learning-augmented implementations of priority queues in three different prediction models, detailed in the next section.



\subsection{Problem definition}\label{sec:pb-def}
A priority queue is a dynamic data structure where each element $x$ is assigned a key $u$ from a totally ordered universe $(\U,<)$, determining its priority. The standard operations of priority queues are:
\begin{enumerate}[label=(\roman*)]
    \item $\opFM ()$: returns the element with the smallest key without removing it,
    \item $\opEM ()$: removes and returns the element with the smallest key,
    \item $\opIS (x, u)$: adds a new element $x$ to the priority queue with key $u$,
    \item $\opDK (x,v)$: decreases the key of an element $x$ to $v$.
\end{enumerate}

The elements of a priority queue can be accessed via their keys in $O(1)$ time using a HashMap. 
Hence, the focus is on establishing efficient algorithms for key storage and organization, facilitating the execution of priority queue operations.
For any key $u\in \U$ and subset $\mathcal{V} \subset \U$, we denote by $\rank(u,\mathcal{V})$ the rank of $u$ in $\mathcal{V}$, defined as the number of keys in $\mathcal{V}$ that are smaller than or equal to $u$,
\begin{equation}\label{eq:def-rank}
\rank(u,\mathcal{V})
= \# \{ v \in \mathcal{V}: v \leq u \}\;.    
\end{equation}

The difficulty lies in designing data structures offering adequate tradeoffs between the complexities of the operations listed above. This paper explores how using predictions can allow us to overcome the limitations of traditional priority queues. We examine three types of predictions.

\paragraph{Dirty comparisons}
In the first prediction model, comparing two keys $(u,v) \in \U^2$ is slow or costly. However, the algorithm can query a prediction of the comparison $(u < v)$. This prediction serves as a rapid or inexpensive, but possibly inaccurate, method of comparing elements of $\mathcal{U}$, termed a \textit{dirty} comparison and denoted by $(u \dcomp v)$. Conversely, the true outcome of $(u < v)$ is referred to as a \textit{clean} comparison. For all $u \in \U$ and $\mathcal{V}\subseteq\U$, we denote by $\errc (u,\mathcal{V})$ the number of inaccurate dirty comparisons between $u$ and elements of $\mathcal{V}$,
\begin{equation}\label{eq:comp-error}
    \errc(u,\mathcal{V}) = \# \{ v \in \mathcal{V} \; : \; \indic (u\dcomp v) \neq \indic (u<v) \}\;.
\end{equation}
This prediction model was introduced in \citep{bai2023sorting} for sorting, but it has a broader significance in comparison-based problems, such as search \citep{borgstrom1993comparison,  nowak2009noisy, tschopp2011randomized}, ranking \citep{wauthier2013efficient, shah2018simple, heckel2018approximate, elactive}, and the design of comparison-based ML algorithms \citep{haghiri2017comparison, haghiri2018comparison, ghoshdastidar2019foundations, perrot2020near, meister2021learning}. Particularly, it has great theoretical importance for priority queues, which have been extensively studied in the comparison-based framework \citep{gonnet1986heaps, brodal1996optimal, edelkamp2000performance}.
Comparison-based models are often used, for example, when the preferences are determined by human subjects. Assigning numerical scores in these cases is inexact and prone to errors, while pairwise comparisons are absolute and more robust \citep{david1963method}. 
Within such a setting, dirty comparisons can be obtained by a binary classifier, and used to minimize human inference yielding clean comparisons, which are time-consuming and might incur additional costs.

\paragraph{Pointer predictions} 
In this second model, upon the addition of a new key $u$ to the priority queue $\queue$, the algorithm receives a prediction $\pprev(u,\queue) \in \queue$ of the predecessor of $u$, which is the largest key belonging to $\queue$ and smaller than $u$. 
Before $u$ is inserted, the ranks of $u$ and its true predecessor in $\queue$ are equal, hence we define the prediction error as
\begin{equation}\label{eq:pred-error}
\erra(u,\queue) = |\rank(u,\queue) - \rank(\pprev(u,\queue), \queue)|\;.
\end{equation}
In priority queue implementations, a HashMap preserves a pointer from each inserted key to its corresponding position in the priority queue. Consequently, $\pprev(u,\queue)$ provides direct access to the predicted predecessor's position. 
For example, this prediction model finds applications in scenarios where concurrent machines have access to the priority queue \citep{sundell2005fast, shavit2000skiplist, linden2013skiplist}. Each machine can estimate, within the elements it has previously inserted, which one precedes the next element it intends to insert. However, this estimation might not be accurate, as other concurrent machines may have inserted additional elements.

\paragraph{Rank predictions} 
The last setting assumes that 
the priority queue
is used in a process where a finite number $N$ of distinct keys will be inserted, i.e., the priority queue is not used indefinitely, but $N$ is unknown to the algorithm. Upon the insertion of any new key $u_i$, the algorithm receives a prediction $\pRank(u_i)$ of the rank of $u_i$ among all the $N$ keys $\{u_j\}_{j \in [N]}$. Denoting by $\Rank(u_i) = \rank(u_i, \{u_j\}_{j \in [N]})$ the true rank of $u_i$, the prediction error of $u_i$ is
\begin{equation}\label{eq:rank-error}
\errp(u_i) = |\Rank(u_i) - \pRank(u_i)|\;.
\end{equation}

The same prediction model was explored in \citep{mccauley2024online} for the online list labeling problem.
\citet{bai2023sorting} investigate a similar model for the sorting problem, but in an offline setting where the $N$ elements to sort and the predictions are accessible to the algorithm from the start.

Note that $N$ counts the distinct keys that are added either with $\opIS$ or $\opDK$ operations. An arbitrarily large number of $\opDK$ operations thus can make $N$ arbitrarily large although the total number of insertions is reduced. However, with a lazy implementation of $\opDK$, we can assume without loss of generality that $N$ is at most quadratic in the total number of insertions. 
Indeed, it is possible to omit executing the $\opDK$ operations when queried, and only store the new elements' keys to update. Then, at the first $\opEM$ operation, all the element's keys are updated by executing for each one only the last queried $\opDK$ operation involving it. Denoting by $k$ the total number of insertions, there are at most $k$ $\opEM$ operations, and for each of them, there are at most $k$ $\opDK$ operations executed. The total number of effectively executed $\opDK$ operations is therefore $O(k^2)$. In particular, this implies that a complexity of $O(\log N)$ is also logarithmic in the total number of insertions.

\subsection{Our results}
We first investigate augmenting binary heaps with predictions. A binary heap $\queue$ with $n$ elements necessitates $O(\log n)$ time but only $O(\log \log n)$ comparisons for insertion \citep{gonnet1986heaps}. 
To leverage dirty comparisons, we first design a \textit{randomized binary search} algorithm to find the position of an element $u$ in a sorted list $L$. We prove that it terminates using $O(\log (\#L))$ dirty comparisons and $O(\log\errc(u,L))$ clean comparisons in expectation.
Subsequently, we use this result to establish an insertion algorithm in binary heaps using $O(\log \log n)$ dirty comparisons and reducing the number of clean comparisons to $O(\log \log  \errc(u,\queue) )$ in expectation. However, $\opEM$ still mandates $O(\log n)$ clean comparisons.
In the two other prediction models, binary heaps and other heap implementations of priority queues appear unsuitable, as the positions of the keys are not determined solely by their ranks.

Consequently, in Section \ref{sec:skiplist}, we shift to using skip lists. We devise randomized insertion algorithms requiring, in expectation, only $O(\log \erra(u,\queue))$ time and comparisons in the pointer prediction model, $O(\log n)$ time and $O(\log \errc(u,\queue))$ clean comparisons in the dirty comparison model, and $O(\log\log N + \log \max_{i \in [N]} \errp (u_i))$ amortized time and $O(\log \max_{i \in [N]} \errp (u_i))$ comparisons in the rank prediction model, where we use in the latter an auxiliary van Emde Boas (vEB) tree \citep{van1976design}. 
Across the three prediction models, $\opFM$ and $\opEM$ only necessitate $O(1)$ time, and the complexity of $\opDK$ aligns with that of insertion. Finally, we prove in Theorem \ref{thm:lowerbound} the optimality of our data structure. Table \ref{table:complexities} summarizes the complexities of our learning-augmented priority queue (LAPQ) in the three prediction models compared to standard priority queue implementations.

\begin{table}[h!]\label{table:complexities}
\centering
\begin{tabular}{|c|c|c|c|c|}
\hline
\textbf{Priority queues} & $\opFM$ & $\opEM$ & $\opIS$ & $\opDK$  \\ \hline
Binary Heap & $O(1)$ & $O(\log n)$ & $O(\log \log n)$ & $O(\log n)$ \\ \hline
{Fibonacci Heap (amortized)} & $O(1)$ & $O(\log n)$ & \multicolumn{2}{c|}{$O(1)$}   \\ \hline
Skip List (average) & $O(1)$ & $O(1)$ & \multicolumn{2}{c|}{$O(\log n)$} \\ \hline
{LAPQ with dirty comparisons (average)} & $O(1)$ & $O(1)$ & \multicolumn{2}{c|}{$O(\log \errc(u, \queue))$}  \\ \hline
LAPQ with pointer predictions (average) & $O(1)$ & $O(1)$ & \multicolumn{2}{c|}{$O(\log \erra(u, \queue))$} \\ \hline
LAPQ with rank predictions (average) & $O(1)$ & $O(1)$ & \multicolumn{2}{c|}{$O(\log \max_{i \in [n]} \errp(u_i))$} \\ \hline
\end{tabular}
\caption{Number of comparisons per operation used by different priority queues.}
\end{table}

Our learning-augmented data structure enables additional operations beyond those of priority queues, such as the \textit{maximum priority queue} operations $\operatorname{FindMax}$, $\operatorname{ExtractMax}$, and $\operatorname{IncreaseKey}$ with analogous complexities, and removing an arbitrary key $u$ from the priority queue, finding its predecessor or successor in expected $O(1)$ time.

Furthermore, 
we show in Section \ref{sec:sorting} that it can be used for sorting, yielding the same guarantees as the learning-augmented sorting algorithms presented in  \citep{bai2023sorting} for the positional predictions model with \textit{displacement error}, and for the dirty comparison model.
In the second model, our priority queue offers even stronger guarantees, as it maintains the elements sorted at any time even if the insertion order is adversarial, while the algorithm of \citep{bai2023sorting} requires a random insertion order to achieve a sorted list by the end within the complexity guarantees. We also show how the learning-augmented priority queue can be used to accelerate Dijkstra's algorithm.

Finally, in Section \ref{sec:exp}, we compare the performance of our priority queue using predictions with binary and Fibonacci heaps when used for sorting and for Dijkstra's algorithm on both real-world city maps and synthetic graphs. The experimental results confirm our theoretical findings, showing that adequately using predictions significantly reduces the complexity of priority queue operations.

\subsection{Related work} 
\paragraph{Learning-augmented algorithms} 
Learning-augmented algorithms, introduced in \citep{lykouris2018competitive,purohit2018improving}, have captured increasing interest over the last years, as they allow breaking longstanding limitations in many algorithm design problems. Assuming that the decision-maker is provided with potentially incorrect predictions regarding unknown parameters of the problem, learning-augmented algorithms must be capable of leveraging these predictions if they are accurate (consistency), while keeping the worst-case performance without advice even if the predictions are arbitrarily bad or adversarial (robustness). Many fundamental online problems were studied in this setting, such as ski rental \citep{gollapudi2019online, anand2020customizing, bamas2020primal, diakonikolas2021learning, antoniadis2021learning, maghakian2023applied, shin2023improved}, scheduling \citep{purohit2018improving, merlis2023preemption, lassota2023minimalistic, benomar2024non}, matching \citep{antoniadis2020secretary, dinitz2021faster, chen2022faster}, and caching \citep{lykouris2018competitive, chlkedowski2021robust, BansalCKPV22,antoniadis2023online, antoniadis2023paging, christianson2023optimal}. Data structures can also be improved within this framework. However, this remains underexplored compared to online algorithms. The seminal paper by \citet{kraska2018case} shows how predictions can be used to optimize space usage.  Another study by \citep{lin2022learning} demonstrates that the runtime of binary search trees can be enhanced by incorporating predictions of item access frequency. Recent papers have extended this prediction model to other data structures, such as dictionaries~\citep{zeynali2024robust} and skip lists \citep{fu2024learning}.  The prediction models we study 
deviate from the latter, and are more related to those considered respectively in \citep{bai2023sorting} for sorting, and \citep{mccauley2024online} for online list labeling. An overview of the growing body of work on learning-augmented algorithms (also known as algorithms with predictions) is maintained at \citep{lindermayr2022algorithms}.

\paragraph{Dirty comparisons}
The dirty and clean comparison model is also related to the prediction querying model, gaining a growing interest in the study of learning-augmented algorithms, where the decision-maker decides when to query predictions and for which items \citep{im2022parsimonious, benomar2024advice, sadek2024algorithms}. In particular, having free-weak and costly-strong oracles has been studied in \citep{silwal2023kwikbucks} for correlation clustering, in \citep{bateni2023metric} for clustering and computing minimum spanning trees in a metric space, and in \citep{eberle2024accelerating} for matroid optimization. Another related setting involves the algorithm observing partial information online, which it can then use to decide whether to query a costly hint about the current item. This has been explored in contexts such as online linear optimization \citep{bhaskara2021logarithmic} and the multicolor secretary problem \citep{benomar2024addressing}.

\paragraph{Priority queues}
Binary heaps, introduced by \citet{williams1964algorithm}, are one of the first efficient implementations of priority queues. They allow $\opFM$ in constant time, and the other operations in $O(\log n)$ time, where $n$ is the number of items in the queue. Shortly after, other heap-based implementations with similar guarantees were designed, such as leftist heaps \citep{crane1972linear} and randomized meldable priority queues \citep{gambin1998randomized}. A new idea was introduced in \citep{vuillemin1978data} with Binomial heaps, where instead of having a single tree storing the items, a binomial heap is a collection of $\Theta(\log n)$ trees with 
exponentially growing sizes, all satisfying the heap property. They allow insertion in constant amortized time, and $O(\log n)$ time for $\opEM$ and $\opDK$. A breakthrough came with Fibonacci heaps \citep{fredman1987fibonacci}, which allow all the operations in constant amortized time, except for $\opEM$, which takes $O(\log n)$ time. It was shown later, in works such as \citep{brodal1996,brodal2012strict}, that the same guarantees can be achieved in the worst-case, not only in amortized time. Although they offer very good theoretical guarantees, Fibonacci heaps are known to be slow in practice 
\citep{larkin2014back, lewis2023comparison}, and other implementations with weaker theoretical guarantees such as binary heaps are often preferred.
We refer the interested reader to the detailed survey on priority queues by ~\citet{brodal2013survey}.

\paragraph{Skip lists} A skip list \citep{pugh1990skip} is a probabilistic data structure, based on classical linked lists,  having shortcut pointers allowing fast access between non-adjacent elements. Due to the simplicity of their implementation and their strong performance, skip lists have many applications \citep{hu2003efficient, ge2008skip, basin2017kiwi}. In particular, they can be used to implement priority queues \citep{ronngren1997comparative}, guaranteeing in expectation a constant time for $\opFM$ and $\opEM$, and $O(\log n)$ time for $\opIS$ and $\opDK$. They show particularly good performance compared to other implementations in the case of concurrent priority queues, where multiple users or machines can make requests to the priority queue \citep{shavit2000skiplist, linden2013skiplist, zhang2015lock}.

\section{Heap priority queues}\label{sec:heap}
A common implementation of priority queues uses binary heaps, enabling all operations in $O(\log n)$ time. Binary heaps maintain a balanced binary tree structure, where all depth levels are fully filled, except possibly for the last one, to which we refer in all this section as the \textit{leaf level}. Moreover, it satisfies the \textit{heap property}, i.e., any key in the tree is smaller than all its children. To maintain these two structure properties, when a new element is added, it is first inserted in the leftmost empty position in the leaf level, and then repeatedly swapped with its parent until the heap property is restored.

\subsection{Insertion in the comparison-based model}
Insertion in a binary heap can be accomplished using only $O(\log \log n)$ comparisons, albeit $O(\log n)$ time, by doing a binary search of the new element's position along the path from the leftmost empty position in the leaf level to the root, which is a sorted list of size $O(\log n)$. To improve the insertion complexity with dirty comparisons, we first tackle the search problem in this setting.

\subsubsection{Search with dirty comparisons}
Consider a sorted list $L = (v_1,\ldots, v_k)$ and a target $u$, the position of $u$ in $L$ can be found using binary search with $O(\log k)$ comparisons.  
Extending ideas from \cite{bai2023sorting} and \cite{lykouris2018competitive}, this complexity can be reduced using dirty comparisons. Indeed, we can obtain an estimated position $\prank(u,L)$ through a binary search with dirty comparisons, followed by an exponential search with clean comparisons, starting from $\prank(u,L)$ to find the exact position $\rank(u,L)$. 
However, the positions of inaccurate dirty comparisons can be adversarially chosen to compromise the algorithm. This can be addressed by introducing randomness to the dirty search phase. We refer to the \textit{randomized binary search} as the algorithm that proceeds similarly to the binary search, but whenever the search is reduced to an array  $\{v_i, \ldots, v_j\}$, instead of comparing $u$ to the pivot $v_m$ with index $m = i + \lfloor \frac{j - i}{2} \rfloor$, it compares $u$ to a pivot with an index chosen uniformly at random in the range $\{i + \lceil \frac{j - i}{4} \rceil, \ldots, j - \lceil \frac{j - i}{4} \rceil \}$.

\begin{lemma}\label{lem:rbs}
    The randomized binary search in a list of size $k$ terminates after $O(\log k)$ comparisons almost surely.
\end{lemma}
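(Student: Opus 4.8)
The plan is to show that the claimed bound in fact holds \emph{deterministically} — for every realization of the random pivot choices — which trivially implies the almost-sure statement. The key point is that, exactly as in ordinary binary search, each comparison strictly shrinks the candidate interval, and because the pivot is always drawn from the ``middle half'' of the current interval, each comparison removes a constant fraction of it, so only $O(\log k)$ comparisons can occur.

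Concretely, suppose the search has been reduced to $\{v_i,\ldots,v_j\}$, and write $d = j-i$ and $s = d+1$ for its size. First I would observe that for $d \ge 2$ the pivot index set $\{i+\lceil d/4\rceil,\ldots,\, j-\lceil d/4\rceil\}$ is nonempty (since $2\lceil d/4\rceil \le d$ for $d\ge 2$), so the step is well defined, while for $s \le 2$ the search finishes with at most one further comparison — this is the base case where the ``middle half'' degenerates. When $d\ge 2$, the chosen pivot index $m$ satisfies $m-i \ge \lceil d/4\rceil$ and $j-m \ge \lceil d/4\rceil$; hence, whichever side the comparison with $v_m$ sends the search to, the new interval has size at most $s - \lceil d/4\rceil \le s - d/4 = (3s+1)/4$. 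Writing $s_t$ for the interval size after $t$ comparisons, this rearranges to $s_{t+1}-1 \le \tfrac34(s_t-1)$, and therefore $s_t - 1 \le (3/4)^t\,(k-1)$.

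It follows that after $t = O(\log k)$ comparisons one has $s_t < 2$, i.e.\ $s_t = 1$, so the position of $u$ in $L$ is determined; adding the at most one comparison spent in the base case yields a total of $O(\log k)$ comparisons, with probability one. I do not expect a genuine obstacle: the only points requiring care are the rounding in $\lceil d/4\rceil$ (ensuring the middle-half index set is nonempty, is bounded away from both endpoints, and makes both branches shrink) and the explicit treatment of the $s\le 2$ base case. Note that the randomness is irrelevant to termination; it will matter only later, when bounding the number of \emph{clean} comparisons in the subsequent exponential-search phase, as it prevents an adversary from placing the erroneous dirty comparisons exactly at the positions the search probes.
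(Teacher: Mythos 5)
Your proof is correct and follows essentially the same route as the paper: both arguments observe that, deterministically (for every realization of the pivot choices), restricting the pivot to the middle half forces the candidate interval to shrink by a constant factor (roughly $3/4$) per comparison, giving termination after $O(\log k)$ comparisons with probability one. The only differences are minor bookkeeping (your recurrence $s_{t+1}-1\leq \tfrac{3}{4}(s_t-1)$ versus the paper's $S_{t+1}\leq \tfrac{3}{4}S_t$, plus your explicit base-case and nonemptiness checks), which do not change the substance.
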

\begin{proof}
Let us denote by $S_t$ the size of the sub-list to which the search is reduced after $t$ comparisons, and 
$T = \min \{t \geq 1: S_t = 1\}$. For convenience, we consider that $S_t = 1$ for all $t > T$.
It holds that $S_0 = n$, and for all $t \geq 1$, if $S_t = j-i+1 \geq 2$ then
\begin{align*}
S_{t+1} 
\leq \max(j-\ell, \ell-i)
\leq j-i - \lceil \tfrac{j - i}{4} \rceil
\leq \tfrac{3}{4}(j-i)
\leq \tfrac{3}{4}S_t\;.
\end{align*}
On the other hand, if $S_t = 1$ then $S_{t+1} = 1$. We deduce that
$S_{t+1} \leq \max(1, \tfrac{3}{4} S_t)$ with probability $1$, hence $S_t \leq \max(1,(3/4)^{t} k)$ for all $t \geq 1$. Therefore, it holds almost surely that
\begin{align*}
T \leq \lceil \log_{4/3} k\rceil\;.
\end{align*}
\end{proof}

We now use the previous Lemma to establish the runtime and the comparison complexity of the dirty/clean search algorithm we described in a sorted list.

\begin{theorem}\label{thm:RBS}
    A dirty randomized binary search followed by a clean exponential search finds the target's position using $O(\log k)$ dirty comparisons and, in expectation, $O(\log k)$ time and $O(\log \errc(u,L))$ clean comparisons.
\end{theorem}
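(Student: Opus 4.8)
The plan is to analyze the two phases of the algorithm separately: the dirty randomized binary search that produces an estimate $\prank(u,L)$ of the true position $\rank(u,L)$, and the clean exponential search that starts from $\prank(u,L)$ and walks outward to find $\rank(u,L)$ exactly. The dirty-comparison bound $O(\log k)$ is immediate from \autoref{lem:rbs}, since the dirty search is exactly a randomized binary search and it terminates almost surely after $O(\log k)$ comparisons. The time bound $O(\log k)$ in expectation likewise follows, because the dirty search itself does $O(\log k)$ work, and the clean exponential search does $O(1)$ work per clean comparison plus $O(1)$ per dirty comparison step, so once we bound the expected number of clean comparisons the time bound comes for free (the $\max(1,\cdot)$ in the exponential search cost means it is always $O(\log k)$ as well, in the worst case).

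The heart of the argument is bounding the expected number of clean comparisons by $O(\log \errc(u,L))$. The exponential search, started from index $\prank(u,L)$, uses $O(1 + \log |\prank(u,L) - \rank(u,L)|)$ clean comparisons, so by concavity of $\log$ and Jensen's inequality it suffices to show $\E[\,|\prank(u,L) - \rank(u,L)|\,] = O(\errc(u,L))$, or more robustly to directly bound $\E[\log(1 + |\prank(u,L) - \rank(u,L)|)]$. The key structural observation is this: if the randomized binary search, when restricted to a current window $\{v_i,\dots,v_j\}$, never queries an index at which the dirty comparison with $u$ is \emph{wrong}, then every dirty comparison agrees with the clean one and the search converges to the true rank, giving displacement $0$. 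More generally, the final displacement $|\prank - \rank|$ is at most (a constant times) the size of the smallest window reached before the first erroneous dirty comparison is queried — once a window has size $O(\errc(u,L))$ or smaller, the displacement is automatically $O(\errc(u,L))$, and before that point the window is large, so a random pivot is likely to avoid the at most $\errc(u,L)$ "bad" positions.

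Concretely, I would track the sequence of window sizes $S_0 \ge S_1 \ge \cdots$ as in \autoref{lem:rbs}, with $S_{t+1} \le \tfrac34 S_t$. Let $\eta := \errc(u,L)$. At a step where the window has size $S_t$, the randomized pivot is chosen uniformly among roughly $S_t/2$ candidate indices in the middle half of the window; at most $\eta$ of these can be positions of inaccurate dirty comparisons, so the probability of hitting a bad pivot at this step is at most $2\eta/S_t$ (and is $0$ once $S_t \le $ some constant multiple of $\eta$, trivially, but more usefully the bound $2\eta/S_t$ is what we use). A crucial point is that \emph{before} the first bad pivot is hit, the window always contains $u$'s true position, so the search behaves exactly like an error-free randomized binary search; and once a bad pivot is hit at window size $S_t$, the final estimate lies within the current window, hence $|\prank - \rank| \le S_t$. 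Summing over the geometric sequence of scales $S_t \approx (3/4)^t k$, the probability that the first bad pivot occurs while the window still has size in $[s, \tfrac43 s)$ is $O(\eta/s)$, so
\begin{align*}
\E\bigl[\log(1 + |\prank - \rank|)\bigr]
&\le \log(1+\eta) + \sum_{\substack{s = (3/4)^t k \\ s > \eta}} O\!\left(\frac{\eta}{s}\right) \log(1+s)\\
&= O(\log \eta) + O(\eta) \sum_{t \,:\, (3/4)^t k > \eta} \frac{\log k}{(3/4)^t k},
\end{align*}
and the geometric sum is dominated by its largest term $O(\log k / \eta)$ up to a constant, giving $O(\log \eta) + O(\log k)$ — which is not yet what we want, so the sum must be organized more carefully by pairing each scale $s$ with $\log(1+s) = O(\log s)$ and using $\sum_{s \ge \eta,\ s \text{ dyadic-ish}} (\eta/s)\log s = O(\log \eta)$ when the sum is over a \emph{geometric} (not harmonic) sequence of $s$'s; indeed $\sum_{j\ge 0} (4/3)^{-j}\log((4/3)^j \eta) = O(\log\eta)$. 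So the whole expectation is $O(\log\eta) = O(\log\errc(u,L))$.

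\textbf{Main obstacle.} The delicate step is the last one: making precise that, conditioned on the first erroneous dirty pivot being queried at a window of size $\Theta(s)$, the final displacement is $O(s)$ \emph{and} that the contribution of large scales $s \gg \eta$ is geometrically (not just harmonically) damped, so that the sum $\sum_s (\eta/s)\log s$ telescopes to $O(\log\eta)$ rather than to $O(\log k)$. This requires arguing that after a bad pivot the search window can only shrink, so the error estimate cannot wander outside it, together with a clean accounting of the "first bad pivot at scale $s$" events across the $O(\log k)$ scales. A minor additional subtlety is handling the boundary and ceiling effects in the pivot range $\{i + \lceil\frac{j-i}{4}\rceil, \dots, j - \lceil\frac{j-i}{4}\rceil\}$ for small windows, and confirming the exponential search cost $O(1 + \log|\prank - \rank|)$ including the degenerate case $\prank = \rank$.
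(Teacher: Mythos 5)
Your proposal is correct and follows essentially the same route as the paper: bound the displacement by the window size at the first erroneous dirty pivot, bound the per-step probability of hitting a bad pivot by $O(\errc(u,L)/S_t)$, and sum over a geometric sequence of window scales (your $(4/3)$-scales with at most one step each versus the paper's dyadic scales with at most three steps each) to get $\E[\log|\prank(u,L)-\rank(u,L)|] = O(\log\errc(u,L))$. One small caution: your aside that it "suffices to show $\E[|\prank-\rank|]=O(\errc)$" would not work (that expectation can be $\Theta(\errc\log k)$, e.g.\ when the few bad positions sit next to the true rank, yielding only $O(\log\errc+\log\log k)$ via Jensen), but the direct bound on $\E[\log(1+|\prank-\rank|)]$ that you actually carry out is exactly the paper's argument.
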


\begin{proof}
The algorithm runs the randomized binary search (RBS) with dirty comparisons to obtain an estimated position $\prank(u,L)$ of $u$ in $L$, then conducts a clean exponential search starting from $\prank(u,L)$ to find the exact position $\rank(u,L)$. Lemma \ref{lem:rbs} guarantees that the dirty RBS uses $O(\log k)$ comparisons, while the clean exponential search terminates in expectation after $O(\log|\rank(u,L) - \prank(u,L)|)$ comparisons. Therefore, for demonstrating Theorem \ref{thm:RBS}, it suffices to prove that $\E[\log |\rank(u,L) - \prank(u,L)|] = O(\log \errc(u,L))$. 

To simplify the expressions, we write $\errc$ instead of $\errc(u,L)$ in the rest of the proof. Let $(i_0,j_0) = (1,n)$, and $(i_t,j_t)$ the indices delimiting the sub-list of $L$ to which the RBS is reduced after $t$ steps for all $t \geq 1$. Denoting by $t^* + 1$ the first step where the outcome of a dirty comparison queried by the algorithm is inaccurate, it holds that $\prank(u,L), \rank(u,L) \in \{i_{t^*}, j_{t^*}\}$. Indeed, the first $t^*$ dirty comparisons are all accurate, thus the estimated and the true positions of $u$ in $L$ are both in $\{i_{t^*}, j_{t^*}\}$. Therefore, $|\rank(u,L) - \prank(u,L)| \leq S_{t^*} - 1$, where $S_t = j_t - i_t + 1$ is the size of the sub-list delimited by indices $(i_t,j_t)$, which yields
\begin{equation}\label{eq:Drank-St}
\E\big[\log|\rank(u,L) - \prank(u,L)| \big] 
\leq \E [ \log( S_{t^*}) ]\;. 
\end{equation}
We focus in the remainder on bounding $\E[\log (S_{t^*})]$. For this, we use a proof scheme similar to that of Lemmas A.4 and A.5 in \cite{bai2023sorting}.
\begin{align}
\E[\log (S_{t^*})]
&\leq \E[\lceil \log (S_{t^*}) \rceil] \nonumber\\
&\leq \sum_{m \geq 1}^\infty  m \Pr( \lceil \log (S_{t^*}) \rceil = m) \nonumber\\
&\leq \log(\errc+1)
+ \sum_{m = \lceil \log (\errc+1) \rceil}^{\infty} m \Pr( \lceil \log (S_{t^*})\rceil = m)\;, \label{eq:ElogS}
\end{align}
and for all $m \geq 1$, we have that
\begin{align*}
\Pr( \lceil \log (S_{t^*})\rceil = m)
&= \Pr(  S_{t^*}\in (2^{m-1},2^m])\\
&= \sum_{t=1}^\infty \Pr(S_{t}\in (2^{m-1},2^m], t = t^*)\\
&= \sum_{t=1}^\infty \Pr(S_{t}\in (2^{m-1},2^m]) \Pr(t = t^* \mid S_{t}\in (2^{m-1},2^m])\;.
\end{align*}
In the sum above, for any $t \geq 1$, the probability that $t=t^*$ is the probability that the next sampled pivot index $\ell_t$ is in the set $\mathcal{F}(u,L) = \{ v \in L: \indic(u\dcomp v) \neq \indic(u<v)\}$, which has a cardinal $\errc$. Thus
\[
\Pr(t = t^* \mid S_{t})
= \Pr( \ell_t \in \mathcal{F}(u,L) \mid S_t)
= \frac{\#(\mathcal{F} \cap \{i_t,\ldots,j_t\})}{S_t}
\leq \frac{\errc}{S_t}\;.
\]
It follows that
\begin{align*}
\Pr( \lceil \log (S_{t^*})\rceil = m)
&\leq \frac{\errc}{2^{m-1}} \sum_{t=1}^\infty  \Pr(S_{t}\in (2^{m-1},2^m])\\
&\leq \frac{\errc}{2^{m-1}} \E[\#\{t \geq 1: S_t \in (2^{m-1},2^m]\}]\\
&\leq \frac{3 \errc}{2^{m-1}}\;,
\end{align*}
where the last inequality is an immediate consequence of the identity $S_{t+1} \leq \tfrac{3}{4} S_{t+1}$ proved in Lemma \ref{lem:rbs}, which shows in particular that $S_{t+3} < S_t/2$, i.e. the after at most $3$ steps, the size of the search sub-list is divided by two, hence $(S_t)_t$ falls into the interval $(2^{m-1},2^m]$ at most three times. Substituting into \eqref{eq:ElogS} gives
\begin{align*}
\E[\log S_{t^*}] 
&\leq \log(\errc+1) + 3 \errc \cdot \sum_{m = \lceil \log (\errc+1) \rceil}^{\infty} \frac{m}{2^{m-1}}\\
&= \log(\errc+1) + 3 \errc \cdot O\left(\frac{\log\errc}{\errc }\right)\\
&= O( \log \errc )\;.
\end{align*}
Therefore, by \eqref{eq:Drank-St}, the expected number of comparisons used during the clean exponential search is at most $O( \E[\log|\rank(u,L) - \prank(u,L)|]) = O(\log\errc)$, which concludes the proof.
\end{proof}

\subsubsection{Randomized insertion in a binary heap}
In a binary heap $\queue$, any new element $u$ is always inserted along the path of size $O(\log n)$ from the root to the leftmost empty position in the leaf level. If all the inaccurate dirty comparisons are chosen along this path, then the insertion would require in expectation $O(\log\errc(u,\queue))$ clean comparisons by Theorem \ref{thm:RBS}. This complexity can be reduced further by randomizing the choice of the root-leaf path where $u$ is inserted, as explained in Algorithm \ref{algo:bin-heap-insert}.

\begin{algorithm}[h]
\caption{Randomized insertion in binary heap}\label{algo:bin-heap-insert}
\SetKwInput{Input}{Input}
   \SetKwInOut{Output}{Output}
   \Input{Binary heap $\queue$ with randomly filled positions in the leaf level, new key $u$}
   $L \gets$ keys path from a uniformly random empty position in the leaf level to the root\;
   $\prank(u,L) \gets$ outcome of a \textbf{dirty randomized} binary search of $u$ in $L$\;
   $\rank(u,L) \gets$ outcome of the \textbf{clean} exponential search of $u$ in $L$ starting from index $\prank(u,L)$\;
   Insert $u$ in the chosen leaf empty position, then swap it with its parents until position $\rank(u,L)$\;
\end{algorithm}

\begin{theorem}\label{thm:heap}
The insertion algorithm \ref{algo:bin-heap-insert} terminates in $O(\log n)$ time, using $O(\log \log n)$ dirty comparisons and $O(\log \log  \errc(u,\queue))$ clean comparisons in expectation.
\end{theorem}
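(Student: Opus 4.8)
I would split the analysis of Algorithm~\ref{algo:bin-heap-insert} into its three phases. The running time and the dirty-comparison count are the easy part: the list $L$ consists of the keys on the ancestors of a single leaf-level slot, so $|L| = O(\log n)$, and reading $L$ off and performing the final sift-up of $u$ into place both take $O(\log n)$ time and use no comparisons. By Lemma~\ref{lem:rbs} the dirty randomized binary search on $L$ uses $O(\log |L|) = O(\log\log n)$ dirty comparisons almost surely (and no clean comparisons), and since $\prank(u,L)$ and $\rank(u,L)$ lie within $O(\log n)$ of each other, the subsequent clean exponential search runs in $O(\log\log n)$ time. This already yields the claimed $O(\log n)$ time bound and the $O(\log\log n)$ dirty-comparison bound.

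The substance is the clean-comparison bound. I would fix the (adversarial) dirty oracle first, so that the error set $\mathcal{F}(u,\queue)$, of size $\errc(u,\queue)$, is determined independently of the algorithm's coin flips. Conditioning on the random path $L$, Theorem~\ref{thm:RBS} bounds the expected number of clean comparisons of the exponential search by $O(\log(\errc(u,L)+2))$, where $\errc(u,L) = |\mathcal{F}(u,\queue)\cap L|$ is the number of errors that happen to lie on $L$. So the theorem reduces to showing $\E_L[\log(\errc(u,L)+2)] = O(\log\log\errc(u,\queue))$, interpreting this bound as $O(1)$ when $\errc(u,\queue)\le 1$.

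The key estimate I would prove is $\E_L[\errc(u,L)] = O(\log\errc(u,\queue))$. By the ``randomly filled leaf level'' invariant, the empty slot chosen by the algorithm is uniform over all $2^{h} = \Theta(n)$ slots of the relevant bottom level, so a key $v$ at depth $d$ is an ancestor of the chosen slot --- equivalently $v \in L$ --- with probability exactly $2^{-d}$ for $d$ above the bottom level (and probability $0$ if $v$ is itself on the bottom level). Writing $\mathcal{F}_d$ for the error keys at depth $d$, linearity of expectation gives $\E_L[\errc(u,L)] = \sum_{d\ge 0}|\mathcal{F}_d|\,2^{-d}$; subject to $|\mathcal{F}_d|\le 2^{d}$ and $\sum_d|\mathcal{F}_d| = \errc(u,\queue)$, this sum is maximized by loading the shallowest levels first, each complete level contributing exactly $1$, hence it is at most $\log_2(\errc(u,\queue)+1)+1 = O(\log\errc(u,\queue))$. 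Plugging this into Jensen's inequality for the concave map $x\mapsto\log(x+2)$ yields $\E_L[\log(\errc(u,L)+2)] \le \log(\E_L[\errc(u,L)]+2) = O(\log\log\errc(u,\queue))$, and taking the outer expectation over $L$ in the conditional bound from Theorem~\ref{thm:RBS} completes the argument.

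The main obstacle is pinning down $\Pr[v\in L] = 2^{-\mathrm{depth}(v)}$ cleanly from the random-fill assumption, together with the boundary bookkeeping: the bottom level may be only partially filled, or may be full so that a fresh level has to be opened, and one must check that $2^{h} = \Theta(n)$ and that a depth-$d$ ancestor dominates exactly $2^{h-d}$ candidate slots in each case. After that, the optimization over how the adversary spreads its $\errc(u,\queue)$ errors across depths is a one-line greedy/convexity observation, and the remainder is routine manipulation with Jensen's inequality and the previously established results.
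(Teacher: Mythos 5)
Your proposal is correct and follows essentially the same route as the paper's proof: reduce to bounding the expected clean comparisons via Theorem \ref{thm:RBS} conditioned on the random path $L$, show $\E_L[\errc(u,L)] = O(\log \errc(u,\queue))$ by computing the probability $2^{-d}$ that a depth-$d$ error key lies on $L$ and maximizing over how the oblivious adversary distributes its errors across levels (shallowest-first), and finish with Jensen's inequality. The only differences are cosmetic: you work with $\log(x+2)$ to handle the case of very small error, which is slightly more careful than the paper's bare $\log$.
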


\begin{proof}
Consider a binary heap $\queue$ containing $n$ elements, with positions randomly filled in the leaf level. Even with this randomization, $\queue$ is still balanced, and the length of any path from an empty position in the leaf level to the root has a size $O(\log n)$. Denoting by $L$ the random insertion path chosen in Algorithm \ref{algo:bin-heap-insert}, constructing $L$ and storing it as an array requires $O(\log n)$ time and space. By Theorem \ref{thm:RBS}, the randomized search in Algorithm $\ref{algo:bin-heap-insert}$ uses $O(\log \log n)$ dirty comparisons, and the exponential search uses $O(\E[\log\errc(u,L)])$ clean comparisons in expectation. Inserting $u$ and then swapping it up until its position does not use any comparison and requires a $O(\log n)$ time.
Therefore, to prove the theorem, it suffices to demonstrate that $\E[\log\errc(u,L)] = O(\log \log \errc(u,\queue))$.

We assume that the key $u$ to be inserted can be chosen by an oblivious adversary, unaware of the randomization outcome. This means that the internal state of $\queue$ remains private at all times. 
Let $\mathcal{F}(u,\queue) = \{ v \in \queue: \indic(u\dcomp v) \neq \indic(u<v)\}$ the set of keys in $\queue$ whose dirty comparison with $u$ is inaccurate, which has a cardinal of $\errc(u,\queue)$. Enumerating the binary heap levels starting from the root, each level $\ell$ except for the last one, denoted $\ell_{\max}$, contains exactly $2^{\ell-1}$ elements $v^\ell_1, \ldots, v^\ell_{2^{\ell-1}}$. A key $v^\ell_i$ in level $\ell$ has a probability $1/2^{\ell-1}$ of belonging to the path $L$. Denoting by $\xi^\ell_i = \indic(v^\ell_i \in \mathcal{F}(u,\queue))$ for all $\ell \in [\ell_{max} - 1]$ and $i \in [2^{\ell-1}]$, the expected number of keys in $\mathcal{F}(u,\queue)$ belonging to $L$ is
\[
\E_L[\errc(u,L)] 
= \sum_{\ell=1}^{\ell_{\max}-1} \sum_{i=1}^{2^{\ell-1}} \frac{\xi^\ell_i}{2^{\ell-1}}
= \sum_{\ell=1}^{\ell_{\max}} \frac{1}{2^{\ell-1}} \left( \sum_{i=1}^{2^{\ell-1}} \xi^\ell_i \right)\;.
\]
Given that $\sum_{\ell=1}^{\ell_{\max}} \sum_{i=1}^{2^{\ell-1}} \xi^\ell_i = \errc(u,\queue) \leq 2^{\lceil \log(\errc(u,\queue) + 1)\rceil} - 1$, the expression above is maximized under this constraint for the instance $\bar{\xi}^\ell_i = \indic(\ell \leq \lceil \log(\errc(u,\queue) + 1)\rceil)$. Therefore, 
\[
\E_L[\errc(u,L)] 
\leq \sum_{\ell} \frac{1}{2^{\ell-1}} \left( \sum_{i=1}^{2^{\ell-1}} \bar{\xi}^\ell_i \right)
= \lceil \log(\errc(u,\queue) + 1)\rceil\;.
\]
Finally, Jensen's inequality and the concavity of $\log$ yield 
\[
\E[\log \errc(u,L)] 
\leq \log\E[\errc(u,L)]
= O(\log \log \errc(u,\queue))\;,
\]
which gives the result.
\end{proof}

\subsection{Limitations}
The previous theorem demonstrates that accurate predictions can reduce the number of clean comparisons for insertion in a binary heap. However, for the $\opEM$ operation, when the minimum key, which is the root of the tree, is deleted, its two children are compared and the smallest is placed in the root position, and this process repeats recursively, with each new empty position filled by comparing both of its children, requiring necessarily $O(\log n)$ clean comparisons in total to ensure the heap priority remains intact. Improving the efficiency of $\opEM$ using dirty comparisons would therefore require bringing major modifications to the binary heap's structure.

Similar difficulties arise when attempting to enhance $\opEM$ using dirty comparisons or the other prediction models in different heap implementations, such as Binomial or Fibonacci heaps.
Consequently, we explore in the next section another priority queue implementation, using skip lists, which allows for an easier and more efficient exploitation of the predictions.

\section{Skip lists}\label{sec:skiplist}
A priority queue can be implemented naively by maintaining a dynamic sorted linked list of keys. This guarantees constant time for $\opEM$, but $O(n)$ time for insertion. Skip lists (see Figure~\ref{fig:skiplist}) offer a solution to this inefficiency, by maintaining multiple levels of linked lists, with higher levels containing fewer elements and acting as shortcuts to lower levels, facilitating faster search and insertion in expected $O(\log n)$ time. In all subsequent discussions concerning linked lists or skip lists, it is assumed that they are doubly linked, having both predecessor and successor pointers between elements.

\begin{figure}[h!]
    \centering
    \includegraphics[width=\textwidth]{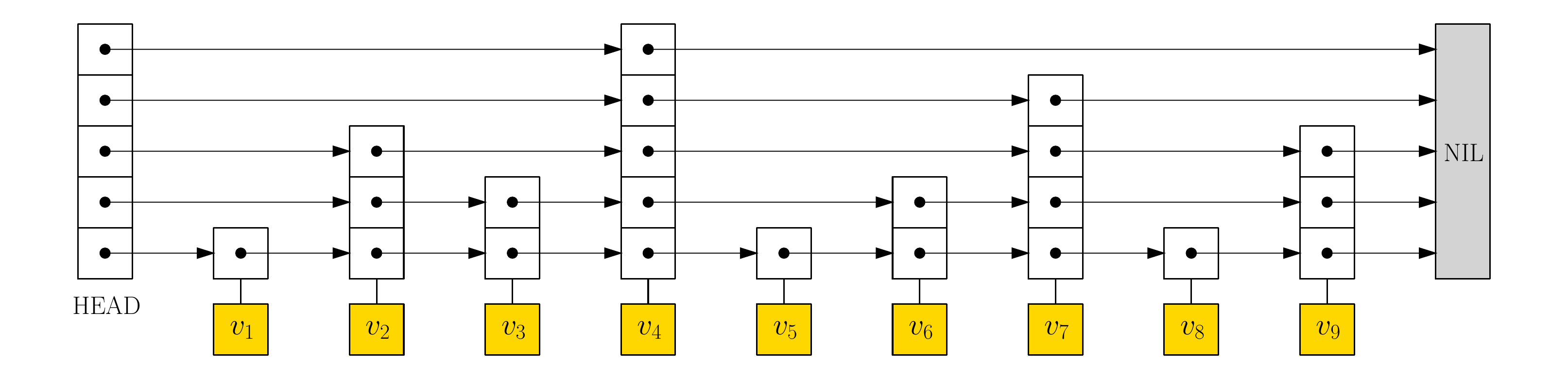}
    \caption{A skip list with keys $v_1 < \ldots < v_9 \in \U$.}
    \label{fig:skiplist}
\end{figure}

The first level in a skip list is an ordinary linked list containing all the elements, which we denote by $v_1,\ldots,v_n$. Every higher level is constructed by including each element from the previous level independently with probability $p$, typically set to $1/2$. For any key $v_i$ in the skip list, we define its height $\height(v_i)$ as the number of levels where it appears, which is an independent geometric random variable with parameter $p$. A number $2 \height(v_i)$ of pointers are associated with $v_i$, giving access to the previous and next element in each level $\ell \in [\height(v_i)]$, denoted respectively by $\prev(v_i,\ell)$ and $\next(v_i,\ell)$. Using a HashMap, these pointers can be accessed in $O(1)$ time via the key value $v_i$.
For convenience, we consider that the skip list contains two additional keys $v_0 = -\infty$ and $v_{n+1} = \infty$, corresponding respectively to the head and the $\nil$ value. Both have a height equal to the maximum height in the queue 
$\height(v_0) = \height(v_{n+1}) = \max_{i\in [n]} \height(v_i)$.

Since the expected height of keys in the skip list is $1/p$, deleting any key only requires a constant time in expectation, by updating its associated pointers, along with those of its predecessors and successors in the levels where it appears. In particular, $\opFM$ and $\opEM$ take $O(1)$ time, and $\opDK$ can be performed by deleting the element and reinserting it with the new key, yielding the same complexity as insertion.
Furthermore, by the same arguments, inserting a new key $u$ next to a given key $v_i$ in the skip list can be done in expected constant time.

Therefore, implementing efficient $\opIS$ and $\opDK$ operations for skip lists with predictions is reduced to designing efficient search algorithms to find the predecessor of a target key $u$ in the skip list, i.e., the largest key $v_i$ in the skip list that is smaller than $u$. In all the following, we denote by $\queue$ a skip list containing $n$ keys $v_1 \leq \ldots \leq v_n \in \U$, and $u \in \U$ the target key. 

The keys of the priority queue can be considered pairwise distinct by grouping elements with the same key together in a collection, for example, a HashSet. This collection can be accessible in $O(1)$ time via the key using a HashMap. When a new item $x$ with priority $u$ is to be inserted, if $u$ is already in the priority queue, then $x$ is added to the corresponding collection in $O(1)$ time. 
With such implementation, when an $\opEM$ operation is called, if multiple elements correspond to the minimum key, then the algorithm can, for example, retrieve an arbitrary one of them, or the first inserted one, depending on the use case.
In the following, we present separately for each model an insertion algorithm leveraging the predictions.

\paragraph{Maximum of i.i.d. geometric random variables}
Before presenting the insertion algorithms in the three prediction models, we present an upper bound from \citep{eisenberg2008expectation} on the expected maximum of i.i.d. geometric random variables with parameter $p$. 
This Lemma will be useful in the analysis of our algorithms, as the heights of elements in the skip list are i.i.d. geometric random variables.

\begin{lemma}[\citep{eisenberg2008expectation}]\label{lem:max-geom}
If $X_1,\ldots,X_m$ are i.i.d. random variables following a geometric random distribution with parameter $p$, then, denoting by $q = 1-p$, it holds that
\[
\E[\max_{i \in [m]} X_i] 
\leq 1 + \frac{1}{\log(1/q)} \sum_{k=1}^m \frac{1}{k}
= O(\log_{1/q}m)\;.
\]
\end{lemma}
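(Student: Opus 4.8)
The plan is to bound the expectation of the maximum by relating it to a tail sum and then controlling the tail probabilities of a single geometric variable via a union bound. Concretely, for a nonnegative integer-valued random variable $Y$, we have $\E[Y] = \sum_{t \geq 1} \Pr(Y \geq t)$, so I would write $\E[\max_i X_i] = \sum_{t \geq 1} \Pr(\max_i X_i \geq t)$. For a geometric variable with parameter $p$ (supported on $\{1,2,\ldots\}$), $\Pr(X_i \geq t) = q^{t-1}$ where $q = 1-p$. A union bound gives $\Pr(\max_i X_i \geq t) \leq \min\{1,\, m q^{t-1}\}$.

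Next I would split the sum at the threshold $t_0$ where $m q^{t-1}$ drops below $1$, i.e. roughly $t_0 = 1 + \log_{1/q} m$. For $t \leq t_0$ I bound each term by $1$, contributing about $\log_{1/q} m$; for $t > t_0$ I use the geometric tail $m q^{t-1}$, whose sum is a geometric series that contributes a bounded (constant-order) amount, since $\sum_{t > t_0} m q^{t-1} = m q^{t_0}/(1-q) \leq 1/(1-q) = 1/p$. Adding these gives $O(\log_{1/q} m)$. To match the precise closed form stated in the lemma — namely $1 + \frac{1}{\log(1/q)}\sum_{k=1}^m \frac{1}{k}$ — I would instead use the cleaner exact manipulation: $\E[\max_i X_i] \leq 1 + \sum_{t \geq 1}\Pr(\max_i X_i \geq t+1) \leq 1 + \sum_{t \geq 1}\min\{1, m q^t\}$, and then recognize that comparing $m q^t \geq 1/k$ for $k = 1,\ldots,m$ partitions the range of $t$; summing $\frac{1}{\log(1/q)}$ over the $m$ "blocks" delimited by these thresholds yields exactly $\frac{1}{\log(1/q)}\sum_{k=1}^m \frac{1}{k}$. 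Alternatively, since this is quoted directly from \citep{eisenberg2008expectation}, I would simply cite that reference for the exact inequality and only reprove the $O(\log_{1/q} m)$ asymptotic, which is all the paper actually needs downstream.

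The only mildly delicate point is getting the exact constants in the stated bound rather than just the asymptotic order, which requires the careful block-decomposition argument matching $t$-ranges to harmonic-series terms; since the lemma is attributed to prior work, this is not a genuine obstacle, and for our purposes the two-regime split (bounded by $1$ below the threshold, geometric series above) suffices to conclude $\E[\max_{i\in[m]} X_i] = O(\log_{1/q} m)$. I would also note in passing that the final equality $1 + \frac{1}{\log(1/q)}\sum_{k=1}^m \frac1k = O(\log_{1/q} m)$ is immediate from $\sum_{k=1}^m \frac1k = O(\log m)$ and $\frac{1}{\log(1/q)} = \Theta(1/\log(1/q))$, so the harmonic-number form and the $\log_{1/q} m$ form are interchangeable.
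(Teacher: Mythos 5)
The paper does not prove this lemma at all: it is quoted verbatim from the cited reference \citep{eisenberg2008expectation}, and only the asymptotic form $O(\log_{1/q} m)$ is ever used downstream (for the maximum height of a set of skip-list elements). Your tail-sum plus union-bound argument, $\E[\max_i X_i] = \sum_{t\geq 1}\Pr(\max_i X_i \geq t) \leq \sum_{t \geq 1}\min\{1, m q^{t-1}\}$ followed by the two-regime split at $t_0 \approx 1+\log_{1/q} m$, is correct and fully suffices for that purpose, giving $\log_{1/q} m + O(1/p)$. Two small caveats. First, your sketch for recovering the exact constant is not quite right: partitioning the $t$-range by the thresholds $mq^t \geq 1/k$ gives blocks of length $\frac{\ln(1+1/k)}{\ln(1/q)}$ on which the summand is at most $1/k$, so you end up with roughly $\frac{1}{\ln(1/q)}\bigl(\ln m + \sum_{k\geq 1} \frac{1}{k^2}\bigr)$ rather than $\frac{1}{\ln(1/q)}\sum_{k=1}^m \frac{1}{k}$ --- same order, different constant. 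The stated inequality is obtained more cleanly either by the exact identity $\E[\max_i X_i] = 1+\sum_{t\geq 1}\bigl(1-(1-q^t)^m\bigr)$ compared with the integral $\int_0^\infty\bigl(1-(1-q^x)^m\bigr)dx = \frac{1}{\ln(1/q)}\int_0^1 \frac{1-(1-u)^m}{u}\,du = \frac{H_m}{\ln(1/q)}$ (the summand is decreasing in $t$), or by the coupling $X_i = \lceil Y_i\rceil \leq 1+Y_i$ with $Y_i$ exponential of rate $\ln(1/q)$, whose maximum has expectation exactly $H_m/\ln(1/q)$. Second, your $O(1/p)$ additive term is benign here since $p$ is a fixed constant in the skip list, but it is worth noting that the cited bound hides no such dependence beyond the $+1$ and the $1/\log(1/q)$ factor. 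Given that the paper attributes the lemma to prior work, your decision to cite the reference for the exact inequality and reprove only the asymptotic is entirely reasonable.
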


\subsection{Pointer prediction}
Given a pointer prediction $v_j = \pprev(u, \queue)$, we describe below an algorithm for finding the true predecessor of $u$ starting from the position of the key $v_j$, then inserting $u$. We assume in the algorithm that $v_j \leq u$. If $v_j > u$, then the algorithm can be easily adapted by reversing the search direction.

\begin{algorithm}[h]
\caption{$\expSearch(\queue, v_j, u)$}\label{algo:exp-search}
\SetKwInput{Input}{Input}
   \SetKwInOut{Output}{Output}
   \Input{Skip list $\queue$, source $v_j \in \queue$, and new key $u \in \U$}
   $w \gets v_j$; $\codecomment{Bottom-Up search}$\\
   \While{$\next(w,\height(w)) \leq u$}
   {$w \gets \next(w,\height(w))$\;}
   $\ell \gets \height(w)$; $\codecomment{Top-Down search}$\\
   \While{$\ell > 0$}
   {\While{$\next(w,\ell) \leq u$}{$w \gets \next(w,\ell)$\;}
   $\ell \gets \ell - 1$\;}
   Insert $u$ next to $w$\;   
\end{algorithm}

Algorithm \ref{algo:exp-search} is inspired by the classical exponential search in arrays, but is adapted to leverage the skip list structure. The first phase consists of a bottom-up search, expanding the size of the search interval by moving to upper levels until finding a key $w\in \queue$ satisfying $w \leq u < \next(w,\height(w))$. The second phase conducts a top-down search from level $\height(w)$ downward, refining the search until locating the position of $u$. 
It is worth noting that the classical search algorithm in skip lists, denoted by $\search(\queue,u)$, corresponds precisely to the top-down search, starting from the head of the skip list instead of $w$.

\begin{theorem}\label{thm:exp-search} 
Augmented with pointer predictions, a skip list allows $\opFM$ and $\opEM$ in expected $O(1)$ time, and $\opIS(u)$ in expected $O(\log \erra(u,\queue))$ time using Algorithm \ref{algo:exp-search}.
\end{theorem}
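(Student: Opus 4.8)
The plan is to prove the three claims separately. The $\opFM$ and $\opEM$ bounds follow immediately from the earlier discussion: the expected height of any key is $1/p = O(1)$, so deleting the minimum and patching the $O(1)$ expected pointers around it takes expected $O(1)$ time, and $\opFM$ just reads off the head pointer. So the substance is the $O(\log \erra(u,\queue))$ bound for $\opIS(u)$ via Algorithm~\ref{algo:exp-search}. Throughout write $\erra = \erra(u,\queue)$ and let $v_j = \pprev(u,\queue)$ be the predicted predecessor; by the remark before the algorithm we may assume $v_j \le u$, so the true predecessor lies to the right of $v_j$, at list-distance exactly $\erra$ along level~1 (this is essentially the definition \eqref{eq:pred-error} of the error, using that ranks of $u$ and its true predecessor coincide before insertion).

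First I would analyze the bottom-up phase. Let $w_0 = v_j$ and $w_1, w_2, \ldots$ be the successive values of $w$; the phase stops at the first $w_t$ with $\next(w_t, \height(w_t)) > u$. The key observation is a coupon/coverage argument: whenever we are at a key $w$ of height $\height(w) = \ell$ and $\next(w,\ell) \le u$, the next level-$\ell$ key after $w$ is at least one step further right, and more importantly, since $w$ reached height $\ell$, it is ``promoted'' past its level-$(\ell-1)$ neighbourhood. A clean way to bound the phase is: after the $t$-th move, the current key $w_t$ has height at least the number of moves that have ``increased the level,'' and the list-distance travelled from $v_j$ is at least geometric in the current height. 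Concretely, I would argue that if the bottom-up phase performs $\tau$ moves, then the keys visited include a chain whose heights are nondecreasing, the top one, $h^* := \height(w_\tau)$, satisfies (in expectation over the random heights) $h^* = O(\log \erra)$ because among the $\erra$ keys strictly between $v_j$ and the true predecessor, the maximum height is $O(\log_{1/q}\erra)$ by Lemma~\ref{lem:max-geom}, and the bottom-up search cannot climb above one plus that maximum height before its stopping condition triggers (once $w$ has height exceeding the max height among those $\erra$ intermediate keys, its level-$\height(w)$ successor has already jumped past the true predecessor of $u$, so $\next(w,\height(w)) > u$ and the phase halts). The number of moves at each fixed level $\ell$ is $O(1)$ in expectation — the spacing between consecutive level-$\ell$ keys is geometric with success probability $p^{\ell-1}$ relative to level~$1$, but relative to level $\ell-1$ it is geometric with parameter $p$, so in expectation we take $1/p = O(1)$ level-$\ell$ hops before hitting a level-$(\ell{+}1)$ key or overshooting $u$ — giving total bottom-up cost $O(h^*) = O(\log \erra)$ in expectation.

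Next the top-down phase. It starts at level $\ell = \height(w_\tau) = h^*$ and at each level walks forward while $\next(w,\ell) \le u$, then drops a level. The standard skip-list analysis (run backwards) shows that the expected number of forward steps at each level is $O(1)$: when we arrive at level $\ell$ at some key $w$, the reason we dropped from level $\ell+1$ is that $w$'s level-$(\ell{+}1)$ successor exceeds $u$; the level-$\ell$ keys strictly between $w$ and that successor each independently had probability $p$ of being promoted, so the number of them is geometric with parameter $p$, hence $O(1)$ in expectation. Summing over the $h^* + 1$ levels gives expected top-down cost $O(h^*) = O(\log \erra)$ as well. Finally, inserting $u$ next to $w$: its height is an independent geometric, $O(1)$ in expectation, and wiring its $O(1)$ pointers and those of its neighbours is $O(1)$ expected time. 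Combining, the total expected time is $O(\log \erra(u,\queue))$.

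The main obstacle I anticipate is making the bottom-up phase bound rigorous: one must be careful that ``the search cannot climb above $1 + \max$ height of the $\erra$ intermediate keys'' is actually correct, and that the per-level $O(1)$ move count holds jointly with the random stopping level $h^*$ (the level at which we stop is itself a function of the heights, so independence must be invoked with care — ideally by conditioning on the heights of all keys and arguing the bound deterministically given the heights, then taking expectations and applying Lemma~\ref{lem:max-geom}). A secondary subtlety is the edge case $v_j > u$, which the paragraph before the algorithm dismisses by symmetry; I would note that reversing the search direction and using $\prev$ pointers gives the identical analysis, since $\erra$ is defined with an absolute value. I would also remark that this bound is never worse than the $O(\log n)$ of an ordinary skip-list search: if $\erra$ is large the bottom-up phase still climbs only to height $O(\log n)$ (the max height in the whole list), so the guarantee degrades gracefully to the standard one, matching the robustness requirement of the learning-augmented framework.
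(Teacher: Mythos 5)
Your proposal is correct and follows essentially the same route as the paper: bound the maximum height $\height^*$ of the keys between the predicted and true predecessor by $O(\log\erra(u,\queue))$ via Lemma \ref{lem:max-geom}, observe the bottom-up phase never climbs above this height, and charge $O(1)$ expected comparisons per level in each phase (the paper formalizes this with the recurrence $T(\ell)\leq 1/p + T(\ell-1)$ in a hypothetical infinite list, and treats the top-down phase as the reverse of a bottom-up walk, which matches your memorylessness/geometric-spacing argument). The independence subtlety you flag is exactly what the paper's infinite-skip-list recurrence is designed to sidestep, so no gap remains.
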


\begin{proof}[Proof of Theorem \ref{thm:exp-search}]
The key $w$ found at the end of the algorithm is the processor of $u$ in $\queue$. Inserting $u$ next to $w$ only requires expected $O(1)$ time. Thus, we demonstrate in the following that Algorithm \ref{algo:exp-search}, starting from a key $v_j \in \queue$, finds the predecessor of $u$ in $O(\log |\rank(v_j) - \rank(u)|)$ expected time, where $r(v)$ denotes the rank $r(v,\queue)$ of $v$ in $\queue$. In particular, for $v_j = \pprev(u,\queue)$, we obtain the claim of the theorem. 

We assume in the proof that $v_j < u$, i.e. the exponential search goes from left to right.
Let $\height^*(v_j,u)$ be the maximum height of all elements in $\queue$ between $u$ and $v_j$ 
\[
\height^*(v_j,u) = \max\{\height(v): v \in \queue \text{ such that } v_j \leq v \leq u \}\;.
\]
The number of elements between $v_j$ and $u$, with $v_j$ included, is $|\rank(u) - \rank(v_j)|+1$, and the heights of all the elements in $\queue$ are independent geometric random variables with parameter $p$, thus Lemma \ref{lem:max-geom} gives that $\E[\height^*(v_j,u)] = O(\log|\rank(u)- \rank(v_j)|)$. 

The key $w^*$ found at the end of the Bottom-Up search is the last element, going from $v_j$ to $u$, having a height of $\height^*(v_j,u)$. Indeed, in the Bottom-Up search, whenever the algorithm reaches a new key, it moves to the maximum level to which the key belongs, the height of $w^*$ is therefore necessarily the maximum height of all the keys between $v_j$ and $w^*$, i.e. $\height(w^*) = \height^*(v_j,w^*)$. Since the Bottom-Up search stops at key $w^*$, then $\next(w^*, \height(w^*)) > u$, which means that there is no key in $\queue$ between $w^*$ and $u$ having a height more than $\height(w^*)-1$.

The number of comparisons made in this phase is therefore at most the number of comparisons needed to reach level $\height^*(v_j,u)+1$ starting from $v_j$ using the Bottom-Up search. We consider the hypothetical setting where the skip list is infinite to the right, the expected number of comparisons to reach level $\height^*(v_j,u)+1$, in this case, is an upper bound on the expected number of comparisons needed in the Bottom-Up phase of Algorithm \ref{algo:exp-search}, as the algorithm also terminates if the end of the skip-list is reached.
Let $T(\ell)$ be the expected number of comparisons made in the bottom-up search to reach level $\ell$ in an infinite skip list. After each comparison made in the bottom-up search, it is possible to go at least one level up with probability $p$, while the algorithm can only move horizontally to the right with probability $1-p$. This induces the inequality
\[
T(\ell) \leq 1 + p T(\ell - 1) + (1-p) T(\ell)\;,
\]
which yields
\[
T(\ell) \leq \frac{1}{p} + T(\ell - 1)\;.
\]
Given that $T(1) = 0$, we have for $\ell \geq 1$ that $T(\ell) \leq \frac{\ell-1}{p}$, and we deduce that the expected number of comparisons made by the algorithm during the Bottom-Up search is at most $\frac{\E[\height^*(v_j,u)]}{p} = O(\log|r(v_j) - r(u)|)$.

In the Top-Down search described in the second phase, the path traversed by the algorithm is exactly the inverse of the Bottom-Up search from the predecessor of $u$ to $w^*$. The same arguments as the analysis of the first phase give that the Top-Down search terminates after $O(\log|r(v_j) - r(u)|)$ comparisons in expectation, which concludes the proof.
\end{proof}

\subsection{Dirty comparisons}
We devise in this section a search algorithm using dirty and clean comparisons. Algorithm \ref{algo:comp-search} first estimates the position of $u$ with a dirty Top-Down search starting from the head, then performs a clean exponential search starting from the estimated position to find the true position.

\begin{algorithm}[h]
\caption{Insertion with dirty and clean comparisons}\label{algo:comp-search}
\SetKwInput{Input}{Input}
   \SetKwInOut{Output}{Output}
   \Input{Skip list $\queue$, new key $u \in \U$}
   $\hat{w} \gets \search(\queue,u)$ with \textbf{dirty} comparisons\;
   $\expSearch(\queue,\hat{w},u)$ with \textbf{clean} comparisons\;
\end{algorithm}

The dirty search concludes within $O(\log n)$ steps, and Theorem \ref{thm:exp-search} guarantees that the exponential search terminates within $O(\log |\rank(\hat{w},\queue) - \rank(u,\queue)| )$ steps. Combining these results and relating the distance between $u$ and $\hat w$ in $\queue$ to the prediction error $\errc(u,\queue)$, we derive the following theorem.

\begin{theorem}\label{thm:comp}
Augmented with dirty comparisons, a skip list allows $\opFM$ and $\opEM$ in $O(1)$ expected time, and $\opIS(u)$ with Algorithm \ref{algo:comp-search} in $O(\log n)$ expected time, using $O(\log n)$ dirty comparisons and $O(\log\eta(u,\queue))$ clean comparisons in expectation.
\end{theorem}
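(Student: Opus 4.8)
The plan is to reuse the two building blocks already established: the classical top-down search in a skip list takes $O(\log n)$ expected steps regardless of whether the comparisons are dirty or clean (since its runtime analysis only uses the random heights, not the correctness of the comparisons), and by Theorem~\ref{thm:exp-search} the exponential search starting from $\hat w$ terminates in $O(\log |\rank(\hat w,\queue)-\rank(u,\queue)|)$ expected time. The $\opFM$ and $\opEM$ claims are immediate from the skip-list discussion preceding the theorem. For insertion, the dirty search uses $O(\log n)$ dirty comparisons, and then the clean exponential search uses a number of clean comparisons (and time) that is, in expectation, $O(\E[\log |\rank(\hat w,\queue)-\rank(u,\queue)|])$ plus the $O(\log n)$ time spent climbing levels — so the whole thing is $O(\log n)$ time with $O(\log n)$ dirty and the desired number of clean comparisons, provided I show $\E[\log|\rank(\hat w,\queue)-\rank(u,\queue)|] = O(\log \errc(u,\queue))$.

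The heart of the argument is therefore to bound the displacement $|\rank(\hat w,\queue)-\rank(u,\queue)|$ in terms of $\errc = \errc(u,\queue)$. First I would observe that a dirty top-down search, when all comparisons it makes happen to be accurate, returns exactly the true predecessor of $u$; so the displacement can only be created by inaccurate dirty comparisons the algorithm actually queries. More precisely, I would trace the top-down search: it visits a sequence of keys, and between the true predecessor $v_r$ of $u$ (where $r=\rank(u,\queue)$) and the returned key $\hat w$, every key the search "stepped over or failed to step over" due to an erroneous dirty comparison must itself lie in the error set $\mathcal{F}(u,\queue)=\{v : \indic(u\dcomp v)\neq\indic(u<v)\}$. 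I would argue that $\hat w$ lies within distance at most (roughly) the number of errors encountered along the search path of the true predecessor, which in turn is at most $\errc$; the cleanest version of this is to show $|\rank(\hat w,\queue)-\rank(u,\queue)| \le \errc$ deterministically, by noting that any key strictly between $\min(\hat w, v_r)$ and $\max(\hat w, v_r)$ for which the search made a wrong turn is in $\mathcal{F}$, and the search makes at most one "decisive" wrong comparison per such key. Once this deterministic bound $|\rank(\hat w,\queue)-\rank(u,\queue)|\le \errc$ is in hand, $\log|\rank(\hat w,\queue)-\rank(u,\queue)|\le \log\errc$ pointwise, so no expectation argument is even needed for that term, and the theorem follows by combining with the $O(\log n)$ time for the dirty search and for level-climbing in the exponential search.

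The step I expect to be the main obstacle is precisely pinning down the deterministic (or near-deterministic) bound $|\rank(\hat w,\queue)-\rank(u,\queue)|\le c\cdot\errc$: the dirty top-down search does not query a comparison with \emph{every} key between $\hat w$ and $v_r$, it skips over long stretches using high-level pointers, so I need to be careful that an erroneous comparison at a high level — which can cause the search to overshoot or undershoot by many bottom-level positions — is still "charged" to at least one error-set element, and that distinct overshoots are charged to distinct elements. The right way to handle this is probably to analyze the search level by level: at each level $\ell$, the horizontal scan stops at a key $w_\ell$ with $\indic(u \dcomp \next(w_\ell,\ell)) $ indicating "too big"; if this is an accurate comparison then $\next(w_\ell,\ell)$ is a genuine upper bound and no displacement is introduced at this level, and if it is inaccurate then $\next(w_\ell,\ell)\in\mathcal{F}$ and the total number of levels is $O(\log n)$, so at worst a union over levels gives $|\rank(\hat w,\queue)-\rank(u,\queue)|\le \errc$ with the key observation that the keys $\next(w_\ell,\ell)$ that caused errors at different levels, together with the keys scanned past, can be injected into $\mathcal{F}$. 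If a fully deterministic bound proves awkward, a fallback is to bound the \emph{expected} displacement by $O(\errc)$ using the skip-list randomness (the expected number of bottom-level positions between consecutive level-$\ell$ keys is $O(2^\ell)$ but is independent of which keys are in $\mathcal{F}$), and then apply Jensen as in the proof of Theorem~\ref{thm:heap}; either route closes the proof.
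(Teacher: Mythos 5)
Your overall architecture (dirty top-down search, then clean $\expSearch$, with $\opFM$/$\opEM$ in $O(1)$ and $O(\log n)$ dirty comparisons) matches the paper, but the step you yourself flag as the main obstacle is a genuine gap, and neither of your proposed repairs works. The deterministic bound $|\rank(\hat w,\queue)-\rank(u,\queue)|\le c\,\errc(u,\queue)$ is simply false: with a single erroneous key $v^*\in\mathcal{F}(u,\queue)$ of height $h$, the dirty search can wrongly jump from the last level-$h$ key preceding $u$ onto $v^*$, and every subsequent comparison (with keys beyond $v^*$, all genuinely larger than $u$) is accurate, so the search never touches the keys it skipped and ends at $\hat w=v^*$. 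The displacement is then the size of a level-$h$ gap, which can be $\Theta(n)$ in a bad height realization even though $\errc=1$; none of the skipped keys can be ``charged'' to $\mathcal{F}$ because no comparison with them was ever queried. Your fallback fails too: the \emph{expected} displacement is not $O(\errc)$ — in the single-error example it is $\E[\min(2^{\height(v^*)},n)]=\Theta(\log n)$ — so Jensen through $\E[D]$ only yields $O(\log\errc+\log\log n)$ clean comparisons, which misses the claimed $O(\log\errc)$ (e.g.\ it gives $O(\log\log n)$ instead of $O(1)$ for constantly many errors).

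The paper's proof avoids rank displacement altogether and charges the cost to \emph{height} rather than width. Since every key compared above level $\height^*(\mathcal{F}(u,\queue))=\max\{\height(v):v\in\mathcal{F}(u,\queue)\}$ is outside $\mathcal{F}(u,\queue)$, all dirty comparisons are accurate until the search first descends to that level; hence both $\hat w$ and the true predecessor lie between two consecutive keys of that level, and every key between them has height at most $\height^*(\mathcal{F}(u,\queue))-1$. The proof of Theorem \ref{thm:exp-search} actually bounds the cost of $\expSearch(\queue,\hat w,u)$ linearly in the maximum height of keys between $\hat w$ and $u$ (not only in $\log$ of their rank distance), so the clean phase costs $O(\height^*(\mathcal{F}(u,\queue)))$; and since heights are i.i.d.\ geometric and independent of which keys are erroneous, Lemma \ref{lem:max-geom} gives $\E[\height^*(\mathcal{F}(u,\queue))]=O(\log\errc(u,\queue))$. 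Note that $\E[\log D]=O(\log\errc)$ does hold, but proving it essentially requires this height argument; it cannot be routed through $\E[D]$ and Jensen as you propose.
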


\begin{proof}
Let $\mathcal{F}(u,\queue)$ the set of keys in $\queue$ whose dirty comparisons with $u$ are inaccurate
\[
\mathcal{F}(u, \queue)
= \{ v \in \queue: \indic(u\dcomp v) \neq \indic(u < v)\}\;.
\]
The prediction error $\errc(u, \queue)$ defined in \eqref{eq:comp-error} is the cardinal of $\mathcal{F}(u, \queue)$. Let 
\[
\height^*(\mathcal{F}(u,\queue)) = \max\{\height(v): v \in \mathcal{F}(u,\queue)\}
\]
be the maximal height of elements in $\mathcal{F}(u,\queue)$. The search algorithm $\search(\queue,u)$ with dirty comparisons starts from the highest level at the head of the skip list, and then goes down the different levels until finding the predicted position $\hat{w}$ of $u$. This is the classical search algorithm in skip lists, and it is known to require $O(\log n)$ comparisons to terminate. This can also be deduced from the analysis of the exponential search described in Algorithm \ref{algo:exp-search}, as it corresponds to the Top-Down search starting from the head of the skip list.

Before level $\height^*(\mathcal{F}(u,\queue))$ is reached in $\search(\queue,u)$, all the dirty comparisons are accurate. Denoting by $v'$ the last key in $\queue$ visited in a level higher than $\height^*(\mathcal{F}(u,\queue))$ during this search, and $v'' = \next(\queue, v', \height^*(\mathcal{F}(u,\queue)))$, it holds that both keys $\hat{w}$ and $w$ are between $v'$ and $v''$, and there is no key in $\queue$ between $v'$ and $v''$ with height more than $\height^*(\mathcal{F}(u,\queue))-1$.

In particular, the maximal key height between $\hat{w}$ and $w$ is at most $\height^*(\mathcal{F}(u,\queue))-1$. We showed in the proof of Theorem \ref{thm:exp-search} that the number of comparisons and runtime of $\expSearch(\queue,v_j,u)$ is linear with the maximal height of keys in $\queue$ that are between $v_j$ and $u$. Using this result with $\hat{w}$ instead of $v_j$ gives that $\expSearch(\queue,\hat{w},u)$ finds the position of $u$ using $O(\height^*(\mathcal{F}(u,\queue)))$ clean comparisons. Finally, since $\height^*(\mathcal{F}(u,\queue))$ is the maximum of a number $\errc(u,\queue)$ of i.i.d. geometric random variables with parameter $p$, Lemma \ref{lem:max-geom} gives that 
\[
\E[\height^*(\mathcal{F}(u,\queue))] = O(\log \errc(u,\queue))\;,
\]
which proves the theorem.
\end{proof}

\subsection{Rank predictions}\label{sec:rank-pred}
In the rank prediction model, each $\opIS(u)$ request is accompanied by a prediction $\pRank(u)$ of the rank of $u$ among all the distinct keys already in, or to be inserted into the priority queue.
If the predictions are accurate and the total number $N$ of distinct keys to be inserted is known, the problem reduces to designing a priority queue with integer keys in $[N]$, taking as keys the ranks $(\Rank_i)_{i\in [N]}$. This problem can be addressed using a van Emde Boas (vEB) tree over $[N]$ \citep{van1976design}, guaranteeing $O(\log \log N)$ runtime for all the priority queue operations. In the case of imperfect rank predictions, we will also use an auxiliary vEB tree to accelerate the runtime of the priority queue.

\subsubsection{Van Emde Boas (vEB) trees}\label{sec:veb}
A vEB tree over an interval $\{i+1,\ldots,i+m\}$ has a root with $\sqrt{m}$ children, each being the root of a smaller vEB tree over a sub-interval $\{i+ k \sqrt{m}+1,\ldots,i+ (k+1) \sqrt{m}\}$ for some $k \in \{0,\ldots,\sqrt{m}-1\}$. The tree leaves are either empty or contain elements with the corresponding key, stored together in a collection, and internal nodes carry binary information indicating whether or not the subtree they root contains at least one element. Denoting by $H(m)$ the height of a vEB tree of size $m$, it holds that $H(m) = 1 + H(\sqrt{m})$, which yields that $H(m) = O(\log \log m)$, enabling efficient implementation of the operations listed below in $O(\log \log m)$ time:
\begin{itemize}
    \item $\opIS(x,k)$: insert a new element $x$ with key $k \in [m]$ in the tree,
    \item $\operatorname{Delete}(x,k)$: Delete the element/key pair $(x,k)$,
    \item $\operatorname{Predecessor}(k)$: return the element in the tree with the largest key smaller than or equal to $k$,
    \item $\operatorname{Successor}(k)$: return the element in the tree with the smallest key larger than or equal to $k$,
    \item $\opEM()$: removes and returns the element with the smallest key.
\end{itemize}
Other operations such as $\opFM$, $\operatorname{FindMax}$, or $\operatorname{Lookup}(k)$ are supported in $O(1)$ time. These runtimes, however, require knowing the maximal key value $m$ from the beginning, as it is used for constructing $\veb_m$. 

\subsubsection{Dynamic size vEB trees} 
If the maximal key value $\Bar{R}$ is unknown, we argue that the operations listed above can be supported in amortized $O(\log \log \Bar{R})$ time. 
Given a vEB tree $\veb_m$ of size $m$, if a new key $k \in \{m+1,\ldots,2m\}$ is to be inserted, we construct an empty vEB tree $\veb_{2m}$ of size $2m$ in $O(m)$ time, then repeatedly extract the elements with minimal key from $\veb_m$ and insert them in $\veb_{2m}$ with the same key. Each $\opEM$ operation in $\veb_m$ and insertion in $\veb_{2m}$ requires $O(\log \log m)$ time. 
Therefore, constructing $\veb_{2m}$ and inserting all the elements from $\veb_m$ takes $O(m \log \log m)$ time. 

This observation can be used to define a vEB with dynamic size. First, we construct a vEB tree with an initial constant size $R_0$. If at some point the size of the vEB tree is $m \geq R_0$ and a new key $k > m$ is to be inserted, then we iterate the size doubling process described before until the size of the vEB tree is at least $k$. At any time step, denoting by $\Bar{R}$ the maximal key value inserted in the vEB tree, and letting $i \geq 1$ such that $2^{i-1} R_0 \leq \Bar{R} < 2^{i} R_0$, the size of the tree has been doubled up to this step $i$ times to cover all the keys. The total time for resizing the vEB tree is at most proportional to
\begin{align*}
\sum_{j=0}^{i-1} 2^j R_0  \log \log (2^j R_0)
&\leq \Big(\sum_{j=0}^{i-1} 2^j \Big) R_0 \log \log \Bar{R} \\
&\leq 2^{i} R_0 \log \log \Bar{R}\\
&\leq 2 \Bar{R} \log \log \Bar{R}\;.
\end{align*}
Therefore, if $N$ is the total number of elements inserted into the vEB tree and $\Bar{R}$ the maximum key value, we can neglect the cost of resizing by considering that each insertion requires an amortized time of $O((1+\frac{\Bar{R}}{N}) \log \log \Bar{R})$. The runtime of all the other operations is $O(\log \log \Bar{R})$. In particular, if $\Bar{R} = O(N)$, then all the operations run in $O(\log \log N)$ amortized time.

\subsubsection{Insertion with rank predictions}

Consider the setting where $N$ is unknown and all the predicted ranks, revealed online, satisfy $\Rank(u_i) = O(N)$. 
The priority queue we consider is a skip list $\queue$ with an auxiliary dynamic size vEB tree $\veb$. For insertions, we use Algorithm \ref{algo:rank}. For $\opEM$, we first extract the minimum $u_{\min}$ from $\queue$ in expected $O(1)$ time, then we delete it from the corresponding position $\pRank(u_{\min})$ in $\veb$ in $O(\log \log N)$ time. Deleting an arbitrary key $u$ from $\queue$ can be done in the same way, by removing it from $\queue$ in expected $O(1)$ time and then deleting it from the position $\pRank(u)$ in $\veb$ in $O(\log \log N)$ time. Thus, as in the other prediction models, $\opDK$ can be implemented by deleting the element and reinserting it with the new key, which requires the same complexity as insertion, with an additional $O(\log \log N)$ term.

\begin{algorithm}[h]
\caption{Insertion with rank prediction}\label{algo:rank}
\SetKwInput{Input}{Input}
   \SetKwInOut{Output}{Output}
   \Input{Skip list $\queue$, dynamic size vEB tree $\veb$, new element $u_i \in \U$, prediction $\pRank(u_i)$}
   Insert $u_i$ in $\veb$ with key $\pRank(u_i)$\;
   $\hat{w} \gets$ predecessor of $u_i$ in $\veb$\;
   $\expSearch(\queue,\hat{w},u_i)$\;
\end{algorithm}

Whenever a new key $u_i$ is to be added, Algorithm \ref{algo:rank} inserts it first in $\veb$ at position $\pRank(u_i)$, gets its predecessor $\hat{w}$ in $\veb$, i.e., the element in $\veb$ with the largest predicted rank smaller than or equal to $\pRank(u_i)$, then uses $\hat{w}$ as a pointer prediction to find the position of $u_i$ in $\queue$. If the predecessor is not unique, the algorithm chooses an arbitrary one. We prove the following theorem, giving the runtime and comparison complexities achieved by this priority queue.

\begin{theorem}\label{thm:rank}
If $\pRank(u_i) = O(N)$ for all $i\in [N]$, then there is a data structure allowing $\opFM$ and $\opEM$ in $O(1)$ amortized time, and $\opIS$ in $O(\log \log N + \log \max_{i\in [N]} \errp(u_i))$ amortized time using $O(\log \max_{i\in [N]} \errp(u_i))$ comparisons in expectation.
\end{theorem}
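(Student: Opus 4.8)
The plan is to charge the cost of Algorithm~\ref{algo:rank} to two sources — the work done on the auxiliary dynamic-size vEB tree $\veb$, and the work done by the exponential search in the skip list $\queue$ — and the heart of the argument will be a purely combinatorial bound showing that the pointer $\hat w$ produced by the vEB predecessor query lies within distance $O(\max_{i\in[N]}\errp(u_i))$ of the true predecessor of $u_i$ in $\queue$. For the vEB part: since every predicted rank satisfies $\pRank(u_i)=O(N)$, the largest key ever placed in $\veb$ is $\bar R=O(N)$, so by the analysis of dynamic-size vEB trees in Section~\ref{sec:veb} an insertion into $\veb$ takes $O(\log\log N)$ amortized time and a $\operatorname{Predecessor}$ or $\operatorname{Delete}$ takes $O(\log\log N)$ worst-case time, and none of these operations ever compares two keys of $\U$. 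Thus $\opFM$ reads the minimum of $\queue$ in $O(1)$ time, while $\opEM$ (and the removal of an arbitrary key) removes the element from $\queue$ in expected $O(1)$ time and from $\veb$ in $O(\log\log N)$ time, the latter cost being prepaid at insertion time (each element is inserted and deleted at most once); hence $\opEM$ is $O(1)$ amortized and the vEB-deletion cost gets folded into $\opIS$. It remains to bound the call $\expSearch(\queue,\hat w,u_i)$.

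The main step — and the step I expect to be the only real obstacle — is the distance bound. Write $E=\max_{i\in[N]}\errp(u_i)$ and let $R(\cdot)$ denote the true rank among the $N$ keys, a strictly increasing injection into $[N]$. I claim $D:=|\rank(u_i,\queue)-\rank(\hat w,\queue)|=O(E)$; note that $D$ is determined by the keys and predictions, independently of the skip-list heights. Recall $\hat w$ is the element currently in $\queue$ with the largest predicted rank not exceeding $\pRank(u_i)$ (or the head $v_0=-\infty$ if there is none); assume $\hat w\le u_i$ in true order, the opposite case being symmetric, so that $D$ equals, up to $\pm1$, the number of keys $v\in\queue$ with $\hat w<v<u_i$. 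By maximality of $\hat w$, every such $v$ has either $\pRank(v)\le\pRank(\hat w)$ or $\pRank(v)>\pRank(u_i)$. In the first subcase $R(v)\le\pRank(v)+E\le\pRank(\hat w)+E\le R(\hat w)+2E$ while $R(v)>R(\hat w)$; in the second subcase $R(v)\ge\pRank(v)-E>\pRank(u_i)-E\ge R(u_i)-2E$ while $R(v)<R(u_i)$. Either way $R(v)$ is confined to a window of at most $2E$ consecutive integers, and since $R$ is injective each subcase accounts for at most $2E$ keys, giving $D\le 4E+O(1)$. (The degenerate case $\hat w=v_0$, where every queue key has predicted rank exceeding $\pRank(u_i)$, is handled by the same counting.) The delicate point is arranging the case split so that \emph{every} queue element lying between $\hat w$ and $u_i$ can be charged to a short range of true ranks; everything else is arithmetic with the triangle inequality for $\errp$.

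Finally I would assemble the bounds. By the proof of Theorem~\ref{thm:exp-search}, which bounds $\expSearch$ started from \emph{any} source $v_j\in\queue$ by $O(\log|\rank(v_j)-\rank(u_i)|)$ in expectation over the heights, the call $\expSearch(\queue,\hat w,u_i)$ runs in $O(\log D)=O(\log E)$ expected time and uses $O(\log E)$ comparisons in expectation. Combining with the vEB contribution, $\opIS$ costs $O(\log\log N+\log\max_{i\in[N]}\errp(u_i))$ amortized time and $O(\log\max_{i\in[N]}\errp(u_i))$ comparisons in expectation, while $\opFM$ and $\opEM$ cost $O(1)$ amortized time, as claimed. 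The only genuinely new ingredient is the distance bound of the middle paragraph; the vEB accounting, the prepayment for $\opEM$, and the appeal to Theorem~\ref{thm:exp-search} are routine.
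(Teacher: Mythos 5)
Your proposal is correct, and it reaches the paper's bound by a genuinely different route at the one non-routine step. The paper proves Lemma~\ref{lem:max-rank-err}, a statement about arbitrary index sets $I\subset[N]$: the rank of $u_i$ within $\{u_j\}_{j\in I}$ and the rank of $\pRank(u_i)$ within $\{\pRank(u_j)\}_{j\in I}$ differ by at most $2\max_j\errp(u_j)$, proved by a somewhat delicate extension argument (the maximal discrepancy $\Delta_*$ over subsets is shown to persist as the subset grows to $[N]$, where it is finally bounded using that the true ranks form a permutation of $[N]$); the theorem then follows by a three-term triangle inequality through $\veb^i$, giving $|\rank(u_i,\queue^i)-\rank(\hat w,\queue^i)|\le 4\max_j\errp(u_j)+1$. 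You instead bound the same distance directly: every $v\in\queue$ strictly between $\hat w$ and $u_i$ must, by maximality of $\pRank(\hat w)$ among predicted ranks at most $\pRank(u_i)$, satisfy $\pRank(v)\le\pRank(\hat w)$ or $\pRank(v)>\pRank(u_i)$, and in either case its true rank is confined to a window of $2E$ consecutive integers adjacent to $R(\hat w)$ or $R(u_i)$; injectivity of true ranks then gives $D\le 4E+O(1)$. This is shorter and more transparent, exploits the structure of $\hat w$ rather than proving a subset-uniform lemma, and works verbatim when the queue is an arbitrary subset of the $N$ keys (which is exactly what the paper's lemma is engineered to handle); the paper's lemma, on the other hand, is a self-contained statement not tied to the vEB predecessor and hence potentially reusable. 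Your handling of the case $\hat w>u_i$ by the symmetric dichotomy is sound, and your remaining ingredients (vEB accounting with $\bar R=O(N)$, prepaying the vEB deletion at insertion, no $\U$-comparisons in $\veb$, and Theorem~\ref{thm:exp-search} applied to the height-independent distance $D$) coincide with the paper's assembly. One small caveat you share with the paper: when $\pRank(u_i)$ has no proper predecessor in $\veb$ and $\hat w$ falls back to the head $v_0$, Theorem~\ref{thm:exp-search} does not literally apply (the head's height is the maximum height, not an independent geometric variable, so a search from it costs $\Theta(\log n)$); your counting still shows $\rank(u_i,\queue)=O(E)$, and the clean fix is to fall back to the vEB successor and reverse the search direction, as the paper does in the pointer model, but the paper's own proof silently assumes a predecessor exists, so this does not put you behind its level of rigor.
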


In contrast to other prediction models, the complexity of inserting $u_i$ is not impacted only by $\errp(u_i)$, but by the maximum error over all keys $\{u_j\}_{j \in [N]}$. This occurs because the exponential search conducted in Algorithm \ref{algo:rank} starts from the key $\hat{w} \in \queue$, whose error also affects insertion performance.
A similar behavior is observed in the online list labeling problem \citep{mccauley2024online}, where the bounds provided by the authors also depend on the maximum prediction error for insertion.

With perfect predictions, the number of comparisons for insertion becomes constant, and its runtime $O(\log \log N)$.
It is not clear if the runtime of all the priority queue operations can be reduced to $O(1)$ with perfect predictions. Indeed, the problem in that case is reduced to a priority queue with all the keys in $[N]$. 
The best-known solution to this problem is a randomized priority queue, proposed by \citet{thorup2007equivalence}, supporting all operations in $O(\sqrt{\log \log N})$ time. However, in our approach, we use vEB trees beyond the classical priority queue operations, as we also require fast access to the predecessor of any element. A data structure supporting all these operations solves the dynamic predecessor problem, for which vEB trees are optimal \citep{puatracscu2006time}. Reducing the runtime of insertion below $O(\log \log N)$ would therefore require omitting the use of predecessor queries.

\subsubsection{Proof of Theorem \ref{thm:rank}}
We denote by $u_1,\ldots,u_N$ the distinct keys that will be inserted in the priority queue.
For all $t \in [N]$, we denote by $\queue^t$, $\veb^t$  the set of keys in the skip list and the set of integer keys in the dynamic vEB tree right after the insertion of $u_t$.
Note that, due to the eventual $\opEM$ operations, the sizes of $\queue^t$ and $\veb^t$ can be smaller than $t$.
Following the notation in \eqref{eq:def-rank}, for all $i,t \in [N]$, we denote by $\rank(u_i,\queue^t)$ the rank of $u_i$ in $\queue^t$, and $\rank(\pRank(u_i),\veb^t)$ the rank of $\pRank(u_i)$ in $\veb^t$. The following lemma shows that the absolute difference between the two previous quantities for any given $i,t$ is at most twice the maximal rank prediction error.

\begin{lemma}\label{lem:max-rank-err}
For any subset $I \subset [N]$, it holds for all $i \in I$ that
\[
|\rank(u_i, \{u_j\}_{j \in I} ) - \rank(\pRank(u_i),\{\pRank(u_j)\}_{j \in I})|
\leq 2 \max_{j \in [N]} \errp(u_j)\;.
\]
\end{lemma}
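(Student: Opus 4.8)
The plan is to bound the two ranks separately against each other by relating both to the position of $u_i$ within the set of true ranks $\{\Rank(u_j)\}_{j\in I}$. Write $\delta = \max_{j\in[N]} \errp(u_j)$ for brevity. The starting observation is that the map $u_j \mapsto \Rank(u_j)$ is order-preserving on $\{u_j\}_{j\in I}$ (it is a restriction of the global rank function, which is strictly monotone on distinct keys), so $\rank(u_i,\{u_j\}_{j\in I})$ equals $\rank(\Rank(u_i),\{\Rank(u_j)\}_{j\in I})$. Thus it suffices to show
\[
\bigl|\rank(\Rank(u_i),\{\Rank(u_j)\}_{j\in I}) - \rank(\pRank(u_i),\{\pRank(u_j)\}_{j\in I})\bigr| \leq 2\delta .
\]

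The key step is a counting argument: $\rank(\pRank(u_i),\{\pRank(u_j)\}_{j\in I})$ counts indices $j\in I$ with $\pRank(u_j) \leq \pRank(u_i)$. Since $|\Rank(u_j)-\pRank(u_j)|\leq\delta$ and $|\Rank(u_i)-\pRank(u_i)|\leq\delta$, the event $\pRank(u_j)\leq\pRank(u_i)$ forces $\Rank(u_j) \leq \pRank(u_j)+\delta \leq \pRank(u_i)+\delta \leq \Rank(u_i)+2\delta$. Hence every $j$ counted by the predicted-rank count has $\Rank(u_j) \leq \Rank(u_i)+2\delta$, so
\[
\rank(\pRank(u_i),\{\pRank(u_j)\}_{j\in I}) \;\leq\; \#\{j\in I : \Rank(u_j)\leq \Rank(u_i)+2\delta\} \;\leq\; \rank(\Rank(u_i),\{\Rank(u_j)\}_{j\in I}) + 2\delta,
\]
where the last inequality holds because the true ranks $\{\Rank(u_j)\}_{j\in[N]}$ are distinct integers, so at most $2\delta$ of them can lie in the half-open interval $(\Rank(u_i),\Rank(u_i)+2\delta]$. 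The symmetric argument — $\pRank(u_j) > \pRank(u_i)$ implies $\Rank(u_j) > \Rank(u_i)-2\delta$ — gives the reverse inequality $\rank(\Rank(u_i),\cdot) \leq \rank(\pRank(u_i),\cdot) + 2\delta$, and combining the two yields the claimed bound.

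I expect the main (minor) subtlety to be handling the tie-breaking in the definition of rank in \eqref{eq:def-rank}, which counts elements $\leq u$ rather than $< u$: one must be careful that ties among predicted ranks are counted consistently, but since the true ranks are genuinely distinct integers the half-open-interval counting bound $\#\{j : \Rank(u_j)\in(a,a+2\delta]\}\leq 2\delta$ is clean, and the predicted side only ever gets compared against the true side, so no real difficulty arises. Everything else is routine inequality chasing using $|\Rank(u_j)-\pRank(u_j)|\leq\delta$ for each $j$.
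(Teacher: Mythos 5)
Your proof is correct, and it takes a genuinely more direct route than the paper's. The paper first reduces the claim for an arbitrary subset $I$ to the case $I=[N]$: it defines $\Delta_*$ as the maximum discrepancy over all subsets, and shows by an induction on appending the remaining keys (with a three-way case analysis) that this maximum persists up to the full set, where $\rank(u_i,\{u_j\}_{j\in[N]})=\Rank(u_i)$; only then does it run the counting argument, exploiting that $(\Rank(u_j))_{j\in[N]}$ is a permutation of $[N]$, so that $\#\{j:\Rank(u_j)\leq \Rank(u_i)+2\errp_{\max}\}$ can be computed exactly as $\min(N,\Rank(u_i)+2\errp_{\max})$. You observe that this exact permutation identity is unnecessary: since the true ranks are distinct integers, at most $2\errp_{\max}$ of them can lie in a half-open interval of length $2\errp_{\max}$, and this bound holds for the ranks restricted to any subset $I$. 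Combined with order preservation ($u_j\leq u_i \iff \Rank(u_j)\leq\Rank(u_i)$, the keys being distinct), the same two implications used by the paper ($\pRank(u_j)\leq\pRank(u_i)\Rightarrow\Rank(u_j)\leq\Rank(u_i)+2\errp_{\max}$ and its symmetric counterpart) then yield the bound directly on $I$, eliminating the subset-reduction induction altogether — a shorter and arguably more transparent argument. The only points to write out fully are the reverse direction (complement the inclusion $\{j\in I:\Rank(u_j)\leq\Rank(u_i)-2\errp_{\max}\}\subseteq\{j\in I:\pRank(u_j)\leq\pRank(u_i)\}$ and count the integers in $(\Rank(u_i)-2\errp_{\max},\Rank(u_i)]$), and the convention, consistent with the paper's usage, that $\rank(\pRank(u_i),\{\pRank(u_j)\}_{j\in I})$ counts indices $j$ (predicted ranks need not be distinct); your argument is fine with this since only the true ranks need distinctness.
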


\begin{proof}
Let
\begin{equation}\label{eq:delta*def}
\Delta_* = \max_{I \subset [N]} \left( \max_{i \in I} |\rank(u_i, \{u_j\}_{j \in I} ) - \rank(\pRank(u_i),\{\pRank(u_j)\}_{j \in I})| \right)\;,    
\end{equation}

and let $I \subset [N]$ for which this maximum is reached. We assume without loss of generality that $I = [m]$ for some $m \in [N]$.
For all $s \in [N]$, let
\[
\tilde{\queue}^s = \{u_j\}_{j \in [s]}, \quad
\tilde{\veb}^s = \{\pRank(u_j)\}_{j \in [s]}, \quad \Delta^s = \max_{i \in [s]} |\rank(u_i, \Tilde{\queue}^s) - \rank(\pRank(u_i), \Tilde{\veb}^s)|\;.
\]
To simplify the expressions, we denote by $\rank^s_i = \rank(u_i, \Tilde{\queue}^s)$ and $\prank^s_i = \rank(\pRank(u_i), \Tilde{\veb}^s)$ for all $(s,i) \in [N]^2$.
We will prove that $\Delta^s = \Delta_*$ for all $s \in \{m, \ldots, N\}$. By definition of $\Delta_*$, it holds that $\Delta_* \geq \Delta^s$ for all $s$, it remains to prove the other inequality.
It is true for $s = m$ by definition of $I$ and $m$. Now let $s \in \{m,\ldots,N-1\}$ and assume that $\Delta^s = \Delta_*$, i.e. there exists $i \leq s$ such that $|\rank^s_i - \prank^s_i| = \Delta_*$. Assume for example that $\prank^s_i = \rank^s_i + \Delta_*$.

\begin{itemize}
\item If $u_{s+1} < u_i$, then $\rank^{s+1}_{s+1} < \rank^{s+1}_i$ and $\rank^{s+1}_i = \rank^s_i + 1$. By definition of $\Delta_*$, it holds that 
\begin{align*}
\prank^{s+1}_{s+1} 
&\leq \rank^{s+1}_{s+1} + \Delta_* 
\leq \rank^{s+1}_i - 1 + \Delta_*
= \rank^s_i + \Delta_*
= \prank^s_i\;.
\end{align*}
This implies necessarily that $\pRank(u_{s+1}) \leq \pRank(u_{i})$, and therefore 
\[
\prank^{s+1}_i 
= \prank^{s}_i + 1
= \rank^{s}_i + 1 + \Delta_*
= \rank^{s+1}_i + \Delta_*\;,
\]
which gives that $\Delta^{s+1} \geq |\prank^{s+1}_i - \rank^{s+1}_i| = \Delta_*$.
\item If $u_{s+1} > u_i$, then $\rank^{s+1}_i = \rank^s_i$ and $\prank^{s+1}_i \geq \prank^s_i$, thus $\Delta^{s+1} \geq \prank^{s+1}_i - \rank^{s+1}_i \geq \prank^{s}_i - \rank^{s}_i = \Delta_*$.
\item If $u_{s+1} = u_i$ then $\rank^{s+1}_{s+1} = \rank^{s+1}_i = \rank^s_i + 1$. On the other hand, if $\pRank(u_{s+1}) \leq \pRank(u_i)$ then $\prank^{s+1}_i = \prank^{s}_i + 1$, otherwise $\prank^{s+1}_{s+1} \geq \prank^{s}_i + 1$. In both cases, it holds that
\begin{align*}
\Delta^{s+1} 
&\geq \max( \prank^{s+1}_i - \rank^{s+1}_i\,, \, \prank^{s+1}_{s+1} - \rank^{s+1}_{s+1} )  \\
&\geq (\prank^{s}_i + 1) - (\rank^{s}_i + 1)
= \Delta_*\;.
\end{align*}
\end{itemize}
The same proof can be used for the case where $\rank^s_i = \prank^s_i + \Delta_*$. 
Therefore, we have for all $s \in \{m,\ldots, N\}$ that $\Delta_* = \Delta^s$. In particular, for $s = N$, observing that $\rank(u_i, \Tilde{\queue}^N) = \Rank(u_i)$ for all $i \in [N]$, we obtain 
\begin{equation}\label{eq:delta*-expression}
\Delta_* = \max_{i \in [N]} |\Rank(u_i) - \rank(\pRank(u_i), \Tilde{\veb}^N)|\;.
\end{equation}

Let us denote by $\errp_{\max} = \max_{j \in [N]} \errp(u_j)$ the maximum rank prediction error. 
We will prove in the following that 
\[
\forall i \in [N]: \quad|\Rank(u_i) - \rank(\pRank(u_i), \Tilde{\veb}^N)| \leq 2 \errp_{\max} \;.
\]
With the assumption that the keys $\{u_i\}_{i\in [N]}$ are pairwise distinct, the ranks $(\Rank(u_i))_{i \in [N]}$ form a permutation of $[N]$.

Given that $|\Rank(u_k) - \pRank(u_k)| \leq \errp_{\max}$ for all $k \in [N]$, it holds for any $i,j \in [N]$ that
\begin{align*}
\pRank(u_j) \leq \pRank(u_i)
&\implies \Rank(u_j) - \errp_{\max} \leq \Rank(u_i) + \errp_{\max}\\
&\implies \Rank(u_j) \leq  \Rank(u_i) + 2 \errp_{\max}\;,
\end{align*}
hence, given that $\veb^N = \{\pRank(u_j)\}_{j \in [N]}$ and by definition \eqref{eq:def-rank} of the rank $\rank(\pRank(u_i), \veb^N)$, we have
\begin{align}
\rank(\pRank(u_i), \veb^N)
&= \#\{ j \in [N]: \pRank(u_j) \leq \pRank(u_i) \} \nonumber\\
&\leq \#\{ j \in [N]: \Rank(u_j) \leq \Rank(u_i) + 2 \errp_{\max} \} \nonumber\\
&= \#\{ k \in [N]: k \leq \Rank(u_i) + 2 \errp_{\max} \} \label{aligneq:card-rank}\\
&= \min(N \,, \, \Rank(u_i) + 2 \errp_{\max}) \nonumber\\
&\leq \Rank(u_i) + 2 \errp_{\max}\;, \label{aligneq:rank-upperbound}
\end{align}
where Equation \ref{aligneq:card-rank} holds because $(\Rank(u_i))_{i \in [N]}$ is a permutation of $[N]$. Similarly, we have for all $i,j \in [N]$ that
\begin{align*}
\Rank(u_j) \leq \Rank(u_i) - 2\errp_{\max}
&\implies \Rank(u_j) + \errp_{\max} \leq \Rank(u_i) - \errp_{\max}\\
&\implies \pRank(u_j) \leq \pRank(u_i)\;,
\end{align*}
and it follows for all $i \in [N]$ that
\begin{align}
\rank(\pRank(u_i), \veb^N)
&= \#\{ j \in [N]: \pRank(u_j) \leq \pRank(u_i) \} \nonumber \\
&\geq \#\{ j \in [N]: \Rank(u_j) \leq \Rank(u_i) - 2 \errp_{\max} \} \nonumber \\
&= \#\{ k \in [N]: k  \leq \Rank(u_i) - 2 \errp_{\max} \} \nonumber\\
&= \max(0\, , \, \Rank(u_i) - 2 \errp_{\max}) \nonumber\\
&= \Rank(u_i) - 2 \errp_{\max}\;. \label{aligneq:rank-lowerbound}
\end{align}
From \eqref{aligneq:rank-upperbound} and \eqref{aligneq:rank-lowerbound}, we deduce that 
\[
\forall i \in [N]: \quad |\Rank(u_i) - \rank(\pRank(u_i), \veb^N)| \leq 2 \errp_{\max}\;,
\]
Combining this with \eqref{eq:delta*-expression} and \eqref{eq:delta*def} yields the wanted result.
\end{proof}

\paragraph{Proof of the theorem} Using the previous lemma and the complexity of $\expSearch$ proved in Theorem \ref{thm:exp-search}, we prove Theorem \ref{thm:rank}.
\begin{proof}
When a key $u_i$ is to be inserted, it is first inserted in the dynamic vEB tree $\veb^i$ with integer key $\pRank(u_i)$, and its predecessor $\hat{w}$ in $\veb^i$ is retrieved. These first operations require $O(\log \log N)$ time. The position of $u_i$ in $\queue^i$ is then obtained via an exponential search starting from $\hat{w}$, which requires $O(\log |\rank(u_i,\queue^i) - \rank(\hat{w},\queue^i)|)$ expected time by Theorem \ref{thm:exp-search}. Finally, inserting $u$ in the found position takes expected $O(1)$ time. 

For any newly inserted element, by accounting for the potential future deletion runtime via $\opEM$ at the moment of insertion, we can consider that all $\opEM$ operations require constant amortized time, and that an additional $O(\log \log N)$ time is needed for insertions.
Therefore, to prove Theorem \ref{thm:rank}, we only need to show that $\log|\rank(u_i,\queue^i) - \rank(\hat{w},\queue^i)| = O(\log \max_{j \in [N]} \errp(u_i) )$. Since $\hat{w}$ is the predecessor of $u_i$ in $\veb^i$, it holds that $\rank(\pRank(u_i), \veb^i) \in \{ \rank(\pRank(\hat{w}), \veb^i), \rank(\pRank(\hat{w}, \veb^i)+1\}$, the first case occurs if $\pRank(u_i) = \pRank(\hat{w})$, and the second if $\pRank(u_i) > \pRank(\hat{w})$. Using this observation and Lemma \ref{lem:max-rank-err}, it follows that
\begin{align*}
|\rank(u_i,\queue^i) &- \rank(\hat{w},\queue^i)|\\
&\leq |\rank(u_i,\queue^i) - \rank(\pRank(u_i),\veb^i)| 
    + |\rank(\pRank(u_i),\veb^i) - \rank(\pRank(\hat{w}),\veb^i)|
    + |\rank(\hat{w},\queue^i) - \rank(\pRank(\hat{w}),\veb^i)|\\
&\leq 4 \max_{j \in [N]} \errp(u_j) + 1\;,
\end{align*}
and it follows that $\log|\rank(u_i,\queue^i) - \rank(\hat{w},\queue^i)| = O(\log \max_{j \in [N]} \errp(u_i))$.
\end{proof}

\section{Lower bounds}
As explained earlier, $\opEM$ requires only $O(1)$ expected time in skip lists. Furthermore, we presented insertion algorithms for the three prediction models and provided upper bounds on their complexities. The following theorem establishes lower bounds on the complexities of $\opEM$ and $\opIS$ for any priority queue augmented with any of the three prediction types.

\begin{theorem}\label{thm:lowerbound}
For each of the three prediction models, the following lower bounds hold.
\begin{enumerate}[label=(\roman*)]
    \item Dirty comparisons: no data structure $\queue$ can support $\opEM$ with $O(1)$ clean comparisons and $\opIS(u)$ with $o(\log\errc(u,\queue))$ clean comparisons in expectation.
    \item Pointer predictions: no data structure $\queue$ can support $\opEM$ with $O(1)$ comparisons and $\opIS(u)$ with $o(\log \erra (u,\queue))$ comparisons in expectation.
    \item Rank predictions: no data structure $\queue$ can support $\opEM$ with $O(1)$ comparisons and $\opIS(u_i)$ with $o(\log \max_{i\in [N]} \errp(u_i))$ comparisons in expectation, for all $i \in [N]$.
\end{enumerate}
\end{theorem}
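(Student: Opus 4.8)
The plan is to prove a single unified information-theoretic lower bound that covers all three models simultaneously, by exhibiting a hard distribution on insertion sequences under which any data structure supporting $\opEM$ with $O(1)$ comparisons in expectation must spend $\Omega(\log k)$ comparisons on some insertion while the relevant prediction error for that insertion is at most $k$. The key observation is that if $\opEM$ uses only $O(1)$ comparisons in expectation, then after a sequence of insertions followed by repeated $\opEM$ calls the data structure effectively outputs the sorted order of the inserted keys using only $O(1)$ amortized comparisons per extracted element beyond whatever comparisons the insertions performed; by the classical comparison-based sorting lower bound, the insertions together must have performed $\Omega(n \log n)$ comparisons on $n$ elements, so on average an insertion costs $\Omega(\log n)$ comparisons. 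The task is then to design, for each model, a family of instances where all prediction errors are simultaneously bounded by some parameter $k$, yet $n$ can be taken as large as $\Theta(k)$ relative to the "hard" part of the instance, forcing $\Omega(\log k)$ comparisons on a typical insertion.

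First I would set up the reduction precisely: fix a data structure $\queue$; suppose for contradiction $\opEM$ costs $O(1)$ clean comparisons in expectation and $\opIS(u)$ costs $o(\log(\text{err}(u)))$ comparisons in expectation. Insert $n$ keys and then call $\opEM$ $n$ times; the concatenated sequence of comparison outcomes is a decision tree (of the randomized algorithm, fixing its coins) whose leaves determine the permutation, so the expected total number of comparisons is at least $\log_2(n!) = \Omega(n\log n)$. Subtracting the $O(n)$ expected comparisons charged to the $\opEM$ calls, the $n$ insertions cost $\Omega(n\log n)$ in expectation, hence some insertion costs $\Omega(\log n)$ in expectation; by averaging over a random insertion order I can make this hold for a constant fraction of the insertions, or in expectation over a uniformly random one.

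Next I would instantiate the three models. For \textbf{dirty comparisons}: take the dirty comparator to be totally uninformative — say it always predicts $u \dcomp v$ as "$<$" regardless — and insert keys in uniformly random order from a set of size $n$; then for the $i$-th inserted key $u$, $\errc(u,\queue)$ equals the number of already-inserted keys larger than $u$, which is at most $n$ and in expectation $\Theta(n)$ for early insertions, so $o(\log \errc(u,\queue))$ comparisons per insertion contradicts the $\Omega(\log n)$ average just established (here dirty comparisons are "free", so only clean comparisons count, exactly as in the hypothesis). For \textbf{pointer predictions}: always predict the head $v_0 = -\infty$ as the predecessor, so $\erra(u,\queue) = \rank(u,\queue)$, which again is $\Theta(n)$ on average over a random insertion order; the same counting bound applies. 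For \textbf{rank predictions}: set every prediction to $\pRank(u_i) = 1$ (or any fixed constant), so $\errp(u_i) = |\Rank(u_i) - 1| \le N$ with $\max_i \errp(u_i) = N-1$; then $o(\log \max_i \errp(u_i)) = o(\log N)$ per insertion contradicts the $\Omega(N \log N)$ total. In each case one must double-check that the adversarial prediction is consistent with the model's definition and that the error parameters are genuinely bounded by the claimed quantity for every key, not just on average.

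The main obstacle I anticipate is making the charging argument fully rigorous when insertions and extractions are interleaved rather than cleanly separated, and when the algorithm is randomized: I need the decision-tree / entropy argument to apply to the \emph{adaptive} sequence of comparisons across the whole execution, and I must be careful that "$\opEM$ uses $O(1)$ comparisons in expectation" is a per-operation guarantee that sums to $O(n)$ over $n$ extractions (linearity of expectation handles this, but the conditioning on the random insertion order needs care). A secondary subtlety is that the lower bound statement is "no data structure can support $\opEM$ with $O(1)$ \emph{and} $\opIS$ with $o(\log\text{err})$ simultaneously" — so I do not need to rule out $O(1)$ $\opEM$ on its own; I just need the contradiction from assuming both, which the counting argument delivers directly. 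I would also remark that the bound is tight, matching the upper bounds of Theorems~\ref{thm:comp}, \ref{thm:exp-search}, and \ref{thm:rank}.
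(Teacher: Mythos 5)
Your proposal is essentially correct for the theorem as literally stated, but it takes a genuinely different route from the paper. You refute the conjunction of the two guarantees by a direct reduction to the classical $\Omega(n\log n)$ comparison lower bound for sorting, using deliberately uninformative predictions whose errors are allowed to be as large as $\Theta(n)$: since any uniform $o(\log k)$ insertion bound evaluated at error $k=\Theta(n)$ gives $o(\log n)$ per insertion, the total cost $o(n\log n)+O(n)$ contradicts $\log_2(n!)$. The paper instead reduces to the \emph{restricted-instance} impossibility result of \citet{bai2023sorting} (their Theorem 1.5): for every scale $\omega(1)\le\xi\le O(\log n)$ it takes instances all of whose displacement errors are at most $2^\xi$, fixes the insertion order by bucket-sorting the predicted ranks (Lemma \ref{lem:bucketsort}), converts rank errors into pointer errors via \eqref{eq:pointer-pred-offline-pRank}, and derives a contradiction with the $\Omega(n\xi)$ sorting bound. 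The difference in strength matters: your argument kills any guarantee of the form ``expected insertion cost $\le g(\mathrm{err})$ with $g(k)=o(\log k)$,'' which is exactly what the theorem asserts, but it only exploits the extreme regime $\mathrm{err}=\Theta(n)$; the paper's route additionally shows hardness on instance families where the maximum error is small (any $2^\xi\ll n$), which is the version one really wants when claiming that the $O(\log\mathrm{err})$ upper bounds of Theorems \ref{thm:comp}, \ref{thm:exp-search} and \ref{thm:rank} are tight at all error magnitudes, and it also transfers verbatim to the dirty-comparison model through \eqref{eq:reduction-dirty-pos}. In exchange, your argument is more elementary and self-contained (no dependence on the fine-grained sorting lower bound).

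Two details of your constructions need small repairs. First, in the pointer model the prediction must satisfy $\pprev(u,\queue)\in\queue$, so ``always predict the head $-\infty$'' is not admissible; predict instead the first inserted key (or the current minimum), which gives $\erra(u,\queue)\le n$ and leaks no (respectively, only $O(\log^2 n)$ bits of) information about the true order. Second, your entropy argument requires that the free side information (dirty comparisons, pointers, predicted ranks) be chosen independently of the hidden permutation — your constant dirty comparator and constant rank predictions do satisfy this, but it should be stated explicitly, since an adversarially \emph{informative} prediction (e.g.\ the reversed order as dirty comparator) would invalidate the $\log_2(n!)$ bound on clean comparisons. Finally, one should monotonize the assumed cost bound $g$ (or, as the paper does, phrase the hypothesis as ``at most $\eps(\xi)\xi$ comparisons whenever $\log(\mathrm{err}+2)\le\xi$''), so that errors smaller than $n$ cannot cost more than $g(n)$; with these fixes your argument goes through.
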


These lower bounds with Theorems \ref{thm:comp} and \ref{thm:exp-search} prove the tightness of our priority queue in the dirty comparison and the pointer prediction models.
In the rank prediction model, the comparison complexities proved in Theorem \ref{thm:rank} are optimal, whereas the runtimes are only optimal up to an additional $O(\log \log N)$ term. In particular, they are optimal if the maximal error is at least $\Omega(\log N)$.

The key argument for proving the lower bounds is a reduction from the design of priority queues to sorting. Indeed, a priority queue can be used for sorting a sequence $A = (a_i)_{i \in [n]} \in \U^n$, by first inserting all the elements in the priority queue, then repeatedly extracting the minimum until the priority queue is empty. In settings with dirty comparisons or \textit{positional} predictions, the number of comparisons required by this sorting algorithm is constrained by the impossibility result demonstrated in Theorem 1.5 of \cite{bai2023sorting}. We use this impossibility result to prove the lower bounds stated in Theorem \ref{thm:lowerbound}.

\subsection{Impossibility result for sorting with predictions}
We begin by summarizing the setting and the impossibility result demonstrated in Theorem 1.5 of \citep{bai2023sorting} for sorting with predictions.

\paragraph{Positional predictions}
In the \textit{positional prediction} model, the objective is to sort a sequence $A = (a_i)_{i \in [n]}$, given a prediction $\pRank_i \in [n]$ of $\Rank_i = \rank(a_i,A)$ for all $i \in [n]$. This model differs from our rank prediction model in that the sequence $A$ and the predictions $\pRank = (\pRank_i)_{i\in [n]}$ are given offline to the algorithm. Two different error measures are considered in \citep{bai2023sorting}, but we restrict ourselves to the \textit{displacement error} $\errp_i = |r(a_i,A) - \pRank_i|$, which is the same as our rank prediction error \ref{eq:rank-error}.
In all the following, consider that $\pRank$ is a fixed permutation of $[n]$, i.e. the predicted ranks are pairwise distinct. For all $\xi \geq 1$, consider the following set of instances
\[
\text{cand}(\hat{R}, n \xi) =  \{ A \in \U^n: \sum_{i=1}^n \log(\errp_i+2) \leq n \xi \}\;.
\]
\citet{bai2023sorting} prove that, for $1 \leq \xi \leq O(\log n)$, no algorithm can sort every instance from $\text{cand}(\hat{R}, n \xi)$ with  $o(n \xi)$ comparisons. However, in their proof, they demonstrate a stronger result: no algorithm can sort every instance from $\mathcal{I}_n(\pRank, \xi)$ with  $o(n \xi)$ comparisons, where $\mathcal{I}_n(\pRank, \xi)$ is the subset of $\text{cand}(\hat{R}, n \xi)$ defined by
\begin{equation}\label{eq:instance-sort}
\mathcal{I}_n(\pRank, \xi)
= \{ A \in \U^n: \max_{i \in [n]} \log(\errp_i+2) \leq \xi\}\;.  
\end{equation}

\paragraph{Dirty comparisons} in the dirty comparison model, the authors prove an analogous result by a reduction to the positional prediction model. More precisely, any permutation $\pRank$ on $[n]$ defines a unique dirty order $\dcomp$ on $A$ given by
$(a_i \dcomp a_j) = (\pRank_i < \pRank_i)$, and $\max_{i \in [n]} \errc_i \leq \max_{i \in [n]} \errp_i$, where $\errc_i = \errc(a_i,A) = \#\{j \in [n]: (a_i \dcomp a_j) \neq (a_i < a_j)$\}. We deduce that 
\begin{equation}\label{eq:reduction-dirty-pos}
\mathcal{I}_n(\pRank, \xi) \subset
\{A \in \U^n: \max_{i \in [n]} \log(\errc_i+2) \leq \xi \}\;.    
\end{equation}
Hence, given the dirty order $\dcomp$, there is no algorithm that can sort every instance with $\max_{i \in [n]} \log(\errc_i+2) \leq \xi$ in $o(n\xi)$ time.
In the following, we use these lower bounds on sorting to prove our Theorem \ref{thm:lowerbound}.

\subsection{Proof of the Theorem}
For any permutation $\pi$ of $[n]$ and $i \in [n]$, we denote by $A^\pi_i = \{a_{\pi(j)}\}_{j \in [i]}$. This first elementary Lemma uses ideas from the proof of Theorem 1.3 in \cite{bai2023sorting}. 

\begin{lemma}\label{lem:bucketsort}
A permutation $\pi$ of $[n]$ satisfying $\pRank_{\pi(1)} \leq \ldots \leq \pRank_{\pi(n)}$ can be constructed in $O(n)$ time, and it holds for all $i \in \{2,\ldots,n\}$ that 
\[
|\rank(a_{\pi(i)},A^{\pi}_{i-1}) - \rank(a_{\pi(i-1)},A^{\pi}_{i-1})|
\leq \errp_{\pi(i-1)} + \errp_{\pi(i)} + \pRank_{\pi(i)} - \pRank_{\pi(i-1)}\;.
\]
\end{lemma}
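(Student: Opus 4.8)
The plan is to construct $\pi$ by a bucket sort on the predicted ranks and then bound the rank displacement directly from the definitions. First I would build $\pi$: since $\pRank$ is a permutation of $[n]$, I sort the indices $i \in [n]$ by the key $\pRank_i$. This is a sort of $n$ integers in the range $[n]$, so it can be done in $O(n)$ time by placing each $i$ in bucket $\pRank_i$ and reading the buckets in increasing order; this yields $\pi$ with $\pRank_{\pi(1)} \leq \ldots \leq \pRank_{\pi(n)}$. (In fact the inequalities are strict since the $\pRank_i$ are distinct, but I only need the weak ordering.)

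The core of the argument is the displacement bound. Fix $i \in \{2,\ldots,n\}$ and abbreviate $a = a_{\pi(i)}$, $b = a_{\pi(i-1)}$, $S = A^\pi_{i-1} = \{a_{\pi(j)}\}_{j\in[i-1]}$. I want to bound $|\rank(a,S) - \rank(b,S)|$. The strategy is to pass through the predicted ranks: by the triangle inequality,
\[
|\rank(a,S) - \rank(b,S)| \leq |\rank(a,S) - \rank(\pRank_{\pi(i)}, \pRank(S))| + |\rank(\pRank_{\pi(i)},\pRank(S)) - \rank(\pRank_{\pi(i-1)},\pRank(S))| + |\rank(\pRank_{\pi(i-1)},\pRank(S)) - \rank(b,S)|,
\]
where $\pRank(S) = \{\pRank_{\pi(j)}\}_{j\in[i-1]}$. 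The first and third terms compare the rank of an element in $S$ with the rank of its predicted value among the predicted values of $S$; each such discrepancy is at most the corresponding displacement error — concretely, for any element $x = a_{\pi(k)}$ with $k \leq i-1$ (or $x$ outside $S$, which is the relevant case for $a = a_{\pi(i)}$ and $b = a_{\pi(i-1)}$), the count of $j$ with $a_{\pi(j)} \le x$ differs from the count with $\pRank_{\pi(j)} \le \pRank(x)$ by at most $\errp_{\pi(k)}$, since moving $x$'s rank by $\errp$ and only $\errp$ other elements can be misordered relative to $x$. This is exactly the same bookkeeping as in the proof of Lemma~\ref{lem:max-rank-err}, applied here to the set $S \cup \{x\}$. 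So the first term is at most $\errp_{\pi(i)}$ and the third at most $\errp_{\pi(i-1)}$.

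The middle term is where the ordering of $\pi$ pays off. Since $\pRank_{\pi(1)} \le \ldots \le \pRank_{\pi(n)}$, the predicted values in $\pRank(S) = \{\pRank_{\pi(j)}: j \le i-1\}$ are precisely the $i-1$ smallest predicted values overall, all of which are $\le \pRank_{\pi(i-1)} \le \pRank_{\pi(i)}$. Hence $\rank(\pRank_{\pi(i-1)}, \pRank(S)) = i-1$ (all of $S$'s predicted values are $\le \pRank_{\pi(i-1)}$), and $\rank(\pRank_{\pi(i)}, \pRank(S))$ counts how many predicted values in $S$ are $\le \pRank_{\pi(i)}$, which is also $i - 1$. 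So the middle term would be $0$ — but to get the slack $\pRank_{\pi(i)} - \pRank_{\pi(i-1)}$ that appears in the statement I instead bound the middle term by the gap in the predicted values themselves: since $\pRank$ is a permutation of $[n]$, the number of predicted values lying in the interval $(\pRank_{\pi(i-1)}, \pRank_{\pi(i)}]$ is exactly $\pRank_{\pi(i)} - \pRank_{\pi(i-1)}$, and this bounds how much the rank-among-$\pRank(S)$ can change between the two thresholds. Summing the three contributions gives $\errp_{\pi(i-1)} + \errp_{\pi(i)} + \pRank_{\pi(i)} - \pRank_{\pi(i-1)}$, as claimed.

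The main obstacle is getting the error-propagation step for the first and third terms exactly right — i.e. justifying cleanly that replacing true values by predicted values shifts the rank of a fixed element by at most its own displacement error, uniformly over the subset $S$. The argument mirrors inequalities \eqref{aligneq:rank-upperbound} and \eqref{aligneq:rank-lowerbound} in the proof of Lemma~\ref{lem:max-rank-err}, but here $S$ is an arbitrary prefix (under $\pi$) rather than all of $[N]$, so I cannot use that $\Rank$ restricted to $S$ is a permutation of an initial segment; I need to argue directly with $\rank(\cdot, S)$ and the per-element bound $|\Rank_k - \pRank_k| \le \errp_k$. Once that lemma-style bound is in place, everything else is routine counting.
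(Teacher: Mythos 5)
The bucket-sort construction of $\pi$ is fine and matches the paper. The displacement bound, however, has a genuine gap: the decomposition through $\pRank(S)$ rests on the claim that, for a fixed element $x$ with displacement error $\errp_x$, the count $\rank(x,S)=\#\{y\in S: y\le x\}$ differs from $\#\{y\in S: \pRank(y)\le \pRank(x)\}$ by at most $\errp_x$. That claim is false, even for the specific prefix $S=A^\pi_{i-1}$ and $x=a_{\pi(i)}$ or $a_{\pi(i-1)}$: how many elements of $S$ are misordered relative to $x$ is governed by \emph{their} errors, not by $\errp_x$. Concretely, take $x=a_{\pi(i)}$ with $\pRank_{\pi(i)}=\Rank_{\pi(i)}$ (so $\errp_{\pi(i)}=0$), and let several elements with predicted ranks just below $\pRank_{\pi(i)}$ have true ranks far above $\Rank_{\pi(i)}$ (each with large error); then $\rank(\pRank_{\pi(i)},\pRank(S))=i-1$ while $\rank(a_{\pi(i)},S)$ is smaller by the number of such elements, which can be arbitrarily large although $\errp_{\pi(i)}=0$. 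The same construction breaks the bound you claim for the third term. Since the lemma's right-hand side only contains the two per-element errors $\errp_{\pi(i-1)}+\errp_{\pi(i)}$, you cannot repair these terms by passing to a maximum error either, so the route through ranks among $\{\pRank_{\pi(j)}\}_{j<i}$ does not go through; this is exactly the obstacle you flagged at the end, and it is not just a bookkeeping issue.

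The fix, which is the paper's argument, is to compare through the \emph{global true ranks} instead of the predicted ranks of the prefix: $|\rank(a_{\pi(i)},A^\pi_{i-1})-\rank(a_{\pi(i-1)},A^\pi_{i-1})|$ counts the elements of $A^\pi_{i-1}$ lying (in true order) between $a_{\pi(i-1)}$ and $a_{\pi(i)}$, which is at most the number of elements of all of $A$ between them, i.e.\ at most $|\Rank_{\pi(i)}-\Rank_{\pi(i-1)}|$. A single triangle inequality then gives
\[
|\Rank_{\pi(i)}-\Rank_{\pi(i-1)}|
\le \errp_{\pi(i)}+\errp_{\pi(i-1)}+|\pRank_{\pi(i)}-\pRank_{\pi(i-1)}|\;,
\]
and the last term loses its absolute value because $\pi$ sorts by predicted rank. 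No per-element misordering claim is needed at all.
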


\begin{proof}
The elements of $A$ can be sorted in non-decreasing order of their predicted ranks $\pRank_i$ within $O(n)$ time using a bucket sort. Hence, we obtain the permutation $\pi$. For all $i\in \{2,\ldots,n\}$, $|\rank(a_{\pi(i)},A^{\pi}_{i-1}) - \rank(a_{\pi(i-1)},A^{\pi}_{i-1})|$ is the number of elements in $A^\pi_{i-1}$ whose ranks are between those of $a_{\pi(i)}$ and $a_{\pi(i-1)}$. This is at most the number of elements in $A$ whose ranks are between those of $a_{\pi(i)}$ and $a_{\pi(i-1)}$, which is $|\Rank_{\pi(i)} - \Rank_{\pi(i-1)}|$. We deduce that
\begin{align*}
|\rank(a_{\pi(i)},A^{\pi}_i) - \rank(a_{\pi(i-1)},A^{\pi}_i)|
&\leq |\Rank_{\pi(i)} - \Rank_{\pi(i-1)}|\\
&\leq |\Rank_{\pi(i)} - \pRank_{\pi(i)}| + |\Rank_{\pi(i-1)} - \pRank_{\pi(i-1)}| + |\pRank_{\pi(i)} - \pRank_{\pi(i-1)}|\\
&= \errp_{\pi(i-1)} + \errp_{\pi(i)} + \pRank_{\pi(i)} - \pRank_{\pi(i-1)}\;,
\end{align*}
where we used the triangle inequality and that $\pRank_{\pi(i)} \geq \pRank_{\pi(i-1)}$.

\end{proof}

We move now to the proof of our lower bound in the pointer prediction model, by reducing the problem of sorting with positional predictions to the design of a learning-augmented priority queue with pointer predictions. 

\subsubsection{Lower bound in the pointer prediction model}
\begin{proof}
Assume that there exists an implementation of a priority queue $\queue$ augmented with pointer predictions, supporting $\opEM$ with $O(1)$ comparisons and the insertion of any new key $u$ with $o(\log\erra(u,\queue))$ comparisons. 
This means that, regardless of the history of operations made on $\queue$, the number of comparisons used by $\opEM$ is at most a constant $C$, and for any $\xi \geq 1$, inserting a key $u$ such that $\log(\erra(u,\queue)+2) \leq \xi$ requires at most $\eps(\xi) \xi$ comparisons, where $\eps(\cdot)$ is a positive function satisfying $\lim_{\xi \to \infty} \eps(\xi) = 0$. We will show that this contradicts the impossibility result on sorting with positional predictions.

Let $\omega(1) \leq \xi \leq O(\log n)$, $\pRank$ a fixed permutation of $[n]$, and $A$ an arbitrary instance from the set $\mathcal{I}_n(\pRank, \xi)$ defined in \eqref{eq:instance-sort}. 
Let $\pi$ be the permutation satisfying the property of Lemma \ref{lem:bucketsort}, and consider the sorting algorithm which inserts the elements of $A$ in $\queue$ in the order given by $\pi$, then extracts the minimum repeatedly until $\queue$ is emptied. Let us denote by $\queue^i$ the state of the $\queue$ after $i$ insertions. Upon the insertion of $a_{\pi(i)}$, the algorithm uses $\pprev(a_{\pi(i)}, \queue^{i-1}) = a_{\pi(i-1)}$ as a pointer prediction. By \eqref{eq:pred-error}, the error of this prediction is 
\[
\erra(a_{\pi(i)}, \queue^{i-1}) = |\rank(a_{\pi(i)},A^{\pi}_{i-1}) - \rank(a_{\pi(i-1)},A^{\pi}_{i-1})|\;,
\]
where $A^{\pi}_{i-1} = \{a_{\pi(j)}\}_{j<i}$, and by Lemma \ref{lem:bucketsort}, we have that 
\begin{equation}\label{eq:pointer-pred-offline-pRank}
\erra(a_{\pi(i)}, \queue^{i-1})
\leq \errp_{\pi(i-1)} + \errp_{\pi(i)} + \pRank_{\pi(i)} - \pRank_{\pi(i-1)}\;.    
\end{equation}

$\pRank$ is a permutation of $[n]$, hence $\pRank_{\pi(i)} = \pRank_{\pi(i-1)} + 1$ by definition of $\pi$. Moreover, $A \in \mathcal{I}_n(\xi)$, thus $\max(\log(\errp_{\pi(i-1)}+2), \log(\errp_{\pi(i)}+2)) \leq \xi$, and it follows that
\begin{align*}
\log(\erra(a_{\pi(i)}, \queue^{i-1})+2)
&\leq \log( \errp_{\pi(i-1)} + \errp_{\pi(i)} + 3)\\
&\leq \log( \errp_{\pi(i-1)}+2) + \log( \errp_{\pi(i)}+2)\\
&\leq 2 \xi\;,
\end{align*}
where the second inequality is a consequence of $\log(\alpha + \beta) \leq \log(\alpha) + \log(\beta)$ for all $\alpha, \beta \geq 2$.

Therefore, all the insertions into $\queue$ require at most $(2\xi\cdot\eps(2\xi))$ comparisons, and the total number of comparisons $T$ used to sort $A$ is at most
\begin{align*}
T 
&\leq n \left( 2\xi\cdot\eps(2\xi) + C \right)\\
&= \left( 2\eps(2\xi) + \frac{C}{\xi} \right) n \xi\\
&= o(n \xi)\;,
\end{align*}
because $\xi = w(1)$, i.e. $\lim_{n \to \infty} \xi = \infty$, and $\lim_{\xi \to \infty} (2\eps(2\xi) + C/\xi) = 0$. 
This means that any instance in $\mathcal{I}_n(\xi)$ can be sorted using $o(n \xi)$ comparisons, which contradicts the lower bound on sorting algorithms augmented with positional predictions.
\end{proof}

\subsubsection{Rank prediction and dirty comparisons model}

In the rank prediction model, the total number of inserted keys is $N=n$. Denote by $\errp_i = \errp(a_i, A)$ for all $i \in [n]$. If there is a data structure $\queue$ not satisfying the lower bound of Theorem \ref{thm:lowerbound} for positional predictions, then we can use it for sorting $A$, and similarly to the proof for pointer predictions, if $\pRank$ is a permutation of $[n]$, using $\queue$ for sorting any instance $A \in \mathcal{I}_n(\xi)$ with $\omega(1) \leq \xi \leq O(\log n)$ requires at most $o(n \xi)$ comparisons, which contradicts the lower bound on learning-augmented sorting algorithms.

The same arguments, combined with \eqref{eq:reduction-dirty-pos}, give the result also for the dirty comparison model.

\section{Applications}

\subsection{Sorting algorithm}\label{sec:sorting}
Our learning-augmented priority queue can be used for sorting a sequence $A = (a_1,\ldots,a_n)$, by first inserting all the elements, then repeatedly extracting the minimum until the priority queue is empty.
We compare below the performance of this sorting algorithm to those of \citep{bai2023sorting}.

\paragraph{Dirty comparison model} Denoting by $\errc_i = \errc(a_i,A)$, \citet{bai2023sorting} prove a sorting algorithm using $O(n\log n)$ time, $O(n \log n)$ dirty comparisons, and $O(\sum_{i} \log(\errc_i + 2))$ clean comparisons. Theorem \ref{thm:comp} yields the same guarantees with our learning-augmented priority queue. Moreover, our learning-augmented priority queue is a skip list, maintaining elements in sorted order at any time, even if the elements are revealed online and the insertion order is adversarial, while in \citep{bai2023sorting}, it is crucial that the insertion order is chosen uniformly at random.

\paragraph{Positional predictions} In their second prediction model, they assume that the algorithm is given offline access to predictions $\{\pRank(a_i)\}_{i\in [n]}$ of the relative ranks $\{\Rank(a_i)\}_{i\in [n]}$ of the $n$ elements to sort, and they study two different error measures. The rank prediction error $\errp_i = |\Rank(a_i) - \pRank(a_i)|$ matches their definition of \textit{displacement error}, for which they prove a sorting algorithm in $O(\sum_i \log(\errp_i + 2))$ time. The same bound can be deduced using our results in the pointer prediction model.
Indeed, by Lemma \ref{lem:bucketsort}, a permutation $\pi$ of $[n]$ satisfying $\pRank(a_{\pi(1)}) \leq \ldots \leq \pRank(a_{\pi(n)})$ can be constructed in $O(n)$ time, and Inequality \eqref{eq:pointer-pred-offline-pRank} shows that inserting the elements of $A$ into the priority queue in the order given by $\pi$, then taking each inserted 
element $a_{\pi(i)}$ as pointer prediction for the following one $a_{\pi(i+1)}$, yields a pointer prediction error of
\[
\erra(a_{\pi(i)}, \queue^{i-1})
\leq \errp_{\pi(i-1)} + \errp_{\pi(i)} + \pRank_{\pi(i)} - \pRank_{\pi(i-1)}
\]
for all $i \in \{2,\ldots, n\}$. By Theorem \ref{thm:exp-search}, the runtime for inserting $a_{\pi(i)}$ using this pointer prediction is $O(\log  \erra(a_{\pi(i)}, \queue^{i-1}))$. The expected total time for inserting all the elements into the priority queue is therefore at most proportional to 
\begin{align*}
\sum_{i=2}^n \log( \erra(a_{\pi(i)}, \queue^{i-1})+2)
&\leq \sum_{i=2}^n \log(\errp_{\pi(i-1)} + \errp_{\pi(i)} + \pRank_{\pi(i)} - \pRank_{\pi(i-1)} + 2)\\
&\leq \sum_{i=2}^n \left( \log(\errp_{\pi(i-1)}+2) + \log(\errp_{\pi(i)} + 2) + \log(\pRank_{\pi(i)} - \pRank_{\pi(i-1)} + 2)) \right)\\
&\leq 2 \sum_{i=1}^n \log(\errp_{\pi(i)} + 2) + \sum_{i=2}^n(\pRank_{\pi(i)} - \pRank_{\pi(i-1)}) + n\\
&\leq 2 \sum_{i=1}^n \log(\errp_{\pi(i)} + 2) + \pRank_{\pi(2)} + n\\
&= O\left(\sum_{i \in [n]} \log(\errp_i + 2) \right)\;.
\end{align*}
The second inequality follows from $\log(\alpha + \beta + \gamma) \leq \log \alpha + \log \beta + \log \gamma$ for all $\alpha, \beta, \gamma \geq 2$, and the subsequent inequality from $\log(\alpha+1) \leq \alpha$ for all $\alpha \geq 0$. Finally, to complete the sorting algorithm, the minimum is repeatedly extracted from the priority queue until is it empty, and each $\opEM$ only takes expected $O(1)$ time in the skip list.

\paragraph{Online rank predictions} If $n$ is unknown to the algorithm, and the elements $a_1,\ldots,a_n$ along with their predicted ranks are revealed online, possibly in an adversarial order, then by Theorem \ref{thm:rank}, the total runtime of our priority queue for maintaining all the inserted elements sorted at any time is $O(n \log \log n + n \log \max_i \errp_i )$, and the number of comparisons used is $O(n \log \max_i \errp_i)$. No analogous result is demonstrated in \citep{bai2023sorting}.

\subsection{Dijkstra's shortest path algorithm}

Consider a run of Dijkstra's algorithm on a directed positively weighted graph $G$ with $n$ nodes and $m$ edges. The elements inserted into the priority queue are the nodes of the graph, and the corresponding keys are their tentative distances to the source, which are updated over time. During the algorithm's execution, at most $m+1$ distinct keys $\{d_i\}_{i \in [m+1]}$ are inserted into the priority queue. Given online predictions $(\pRank(d_i))_{i \in [m+1]}$ of their relative ranks $(\Rank(d_i))_{i \in [m+1]}$, the total runtime using our priority queue augmented with rank predictions is
\[
O\big(m \log \log n + m \log\max_{i \in [m+1]} |\Rank(d_i)-\pRank(d_i)|\big)\;.
\]
In contrast, the shortest path algorithm of~\cite{LattanziSV23} (which also works for negative edges) has a linear dependence on a similar error measure. Even with arbitrary error, our guarantee is never worse than the $O(m \log n)$ runtime with binary heaps. Using Fibonacci heaps results in an $O(n \log n + m)$ runtime, which is surpassed by our learning-augmented priority queue in the case of sparse graphs where $m = o(\frac{n \log n}{\log \log n})$ if predictions are of high quality. However, it is known that Fibonacci heaps perform poorly in practice, even compared to binary heaps, as supported by our experiments.


\section{Experiments}\label{sec:exp}
In this section, we empirically evaluate the performance of our learning-augmented priority queue (LAPQ) by comparing it with Binary and Fibonacci heaps. We use two standard benchmarks for this evaluation: sorting and Dijkstra's algorithm. We also compare our results with those of \cite{bai2023sorting} for sorting. For Dijkstra's algorithm, we assess performance on both real city maps and synthetic random graphs.
In all the experiments, each data point represents the average result from 30 independent runs. 
The code used for conducting the experiments is available at \href{https://github.com/Ziyad-Benomar/Learning-augmented-priority-queues}{github.com/Ziyad-Benomar/Learning-augmented-priority-queues}.

\subsection{Sorting} 
We compare sorting using our LAPQ with the algorithms of \cite{bai2023sorting} under their same experimental settings. Given a sequence $A = (a_1, \ldots, a_n)$, we evaluate the complexity of sorting it with predictions in the \textit{class} and the \textit{decay} setting.
In the first, $A$ is divided into $c$ classes $((t_{k-1}, t_{k}])_{k \in [c]}$, where $0 = t_0 \leq t_1 \leq \ldots \leq t_c = n$ are uniformly random thresholds. The predicted rank of any item $a_i$ with $t_k \leq i < t_{k+1}$ is sampled uniformly at random within $(t_k, t_{k+1}]$. 
In the decay setting, the ranking is initially accurate but degrades over time. Each time step, one item’s predicted position is perturbed by 1, either left or right, uniformly at random. 

In both settings, we test the LAPQ with the three prediction models. First, assuming that the rank predictions are given offline, we use pointer predictions as explained in Section \ref{sec:sorting}. In the second case, the elements to insert along with their predicted ranks are revealed online in a uniformly random order. Finally, in both the class and the decay settings, we test the dirty comparison setting with the dirty order $(a_i \dcomp a_j) = (\pRank(a_i) < \pRank(a_j))$ induced by the predicted ranks.

\begin{figure}[h!]
\centering
\includegraphics[width=0.9\textwidth]{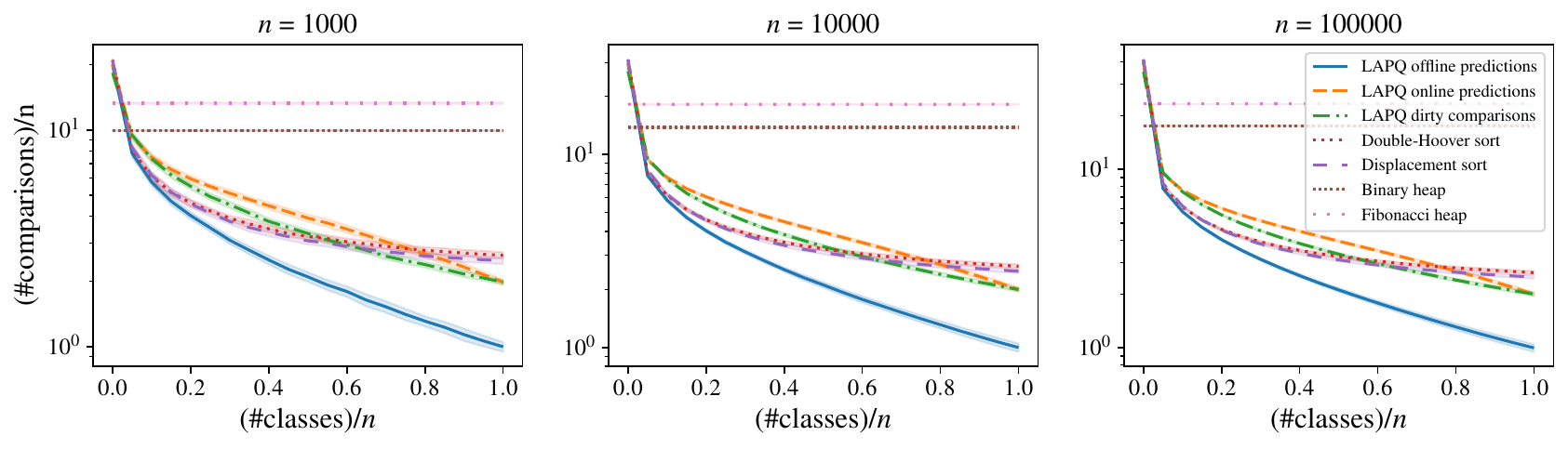}
\caption{Sorting with rank predictions in the \textit{class} setting, for $n \in \{1000, 10000, 100000\}$.}
\label{fig:sorting-class-3-4-5}
\end{figure}

\begin{figure}[h!]
\centering
\includegraphics[width=0.9\textwidth]{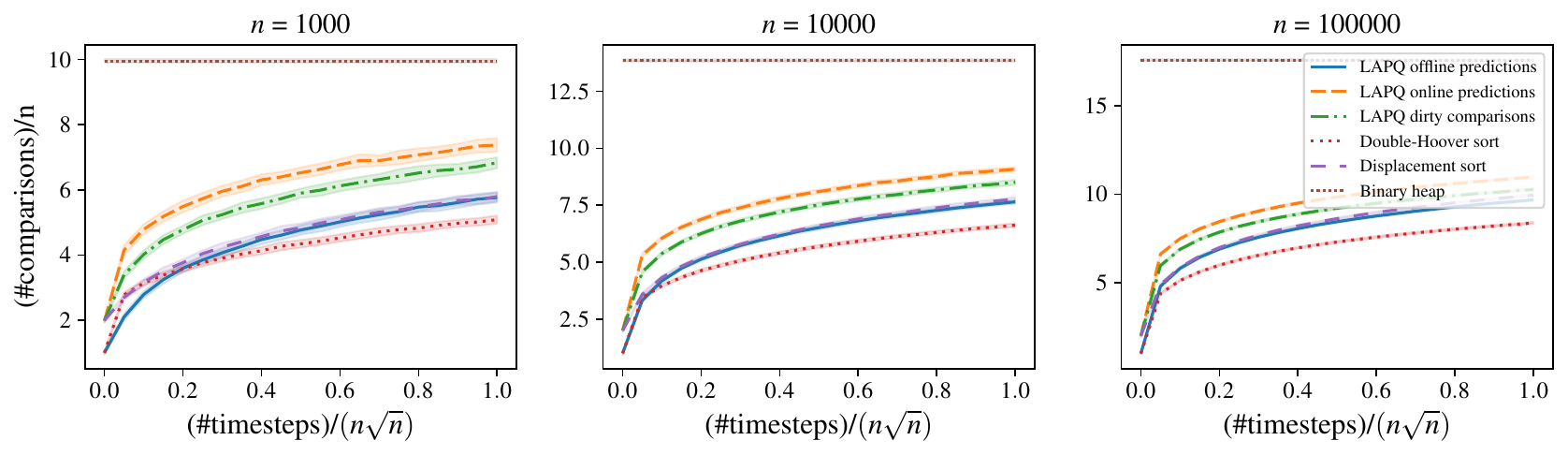}
\caption{Sorting with rank predictions in the \textit{decay} setting, for $n \in \{1000, 10000, 100000\}$.}
\label{fig:sorting-decay-3-4-5}
\end{figure}

Figures \ref{fig:sorting-class-3-4-5} and \ref{fig:sorting-decay-3-4-5} show the obtained results respectively in the class and the decay setting for $n \in \{10^3, 10^4, 10^5\}$. In the class setting with offline predictions, the LAPQ slightly outperforms the \textit{Double-Hoover} and \textit{Displacement} sort algorithms of \citet{bai2023sorting}, which were shown to outperform classical sorting algorithms. 
In the decay setting, the LAPQ matches the performance of the Displacement sort, but is slightly outperformed by the Double-Hoover sort. With online predictions, although the problem is harder, LAPQ's performance remains comparable to the previous algorithms. In both settings, the LAPQ with offline predictions, online predictions, and dirty comparisons all yield better performance than binary or Fibonacci heaps, even with predictions that are not highly accurate.

\subsection{Dijkstra's algorithm} 
Consider a graph $G = (V, E)$ with $n$ nodes and $m$ edges, and a source node $s \in V$. 
In the first predictions setting, we pick a random node $\hat{s}$ and run Dijkstra's algorithm with $\hat{s}$ as the source, memorizing all the keys $\hat{D} = (\hat{d}_1, \ldots, \hat{d}_m)$ inserted into the priority queue. In subsequent runs of the algorithm with different sources, when a key $d_i$ is to be inserted, we augment the insertion with the rank prediction $\pRank(d_i) = \rank(d_i, \hat{D})$.
We call these \textit{key rank predictions}.
This model aims at exploiting the topology and uniformity of city maps. As computing shortest paths from any source necessitates traversing all graph edges, keys inserted into the priority queue—partial sums of edge lengths—are likely to exhibit some degree of similarity even if the algorithm is executed from different sources. Notably, this prediction model offers an explicit method for computing predictions, readily applicable in real-world scenarios.

In the second setting, we consider rank predictions of the nodes in $G$, ordered by their distances to $s$. As Dijkstra's algorithm explores a new node $x \in V$, it receives a prediction $\prank(x)$ of its rank. The node $x$ is then inserted with a key $d_i$, to which we assign the prediction $\pRank(d_i) = \prank(x)$. Unlike the previous experimental settings, we initially have predictions of the nodes' ranks, which we extend to predictions of the keys' ranks. Similarly to the sorting experiments, we consider \textit{class} and \textit{decay} perturbations of the node ranks.

In the context of searching the shortest path, rank predictions in the class setting can be derived from subdividing the city into multiple smaller areas. Each class corresponds to a specific area, facilitating the ordering of areas from closest to furthest relative to the source. However, comparing the distances from the source to the nodes in the same class might be inaccurate.
On the other hand, the decay setting simulates modifications to shortest paths, such as rural works or new route constructions, by adding or removing edges from the graph. These alterations may affect the ranks of a limited number of nodes.

We present below experiment results obtained with different city maps having different numbers of nodes and edges. All the maps were obtained using the Python library Osmnx \cite{boeing2017osmnx}. 
Figures \ref{fig:dij-class-cities} and \ref{fig:dij-decay-cities} respectively illustrate the results in the \textit{class} and the \textit{decay} settings with \textit{node rank predictions}. In both figures, for each city, we present the numbers of comparisons used for the same task by a binary and Fibonacci heap, and the number of comparisons used when the priority queue is augmented with \textit{key rank predictions}.

\begin{figure}[h!]
    \centering
    \includegraphics[width=\textwidth]{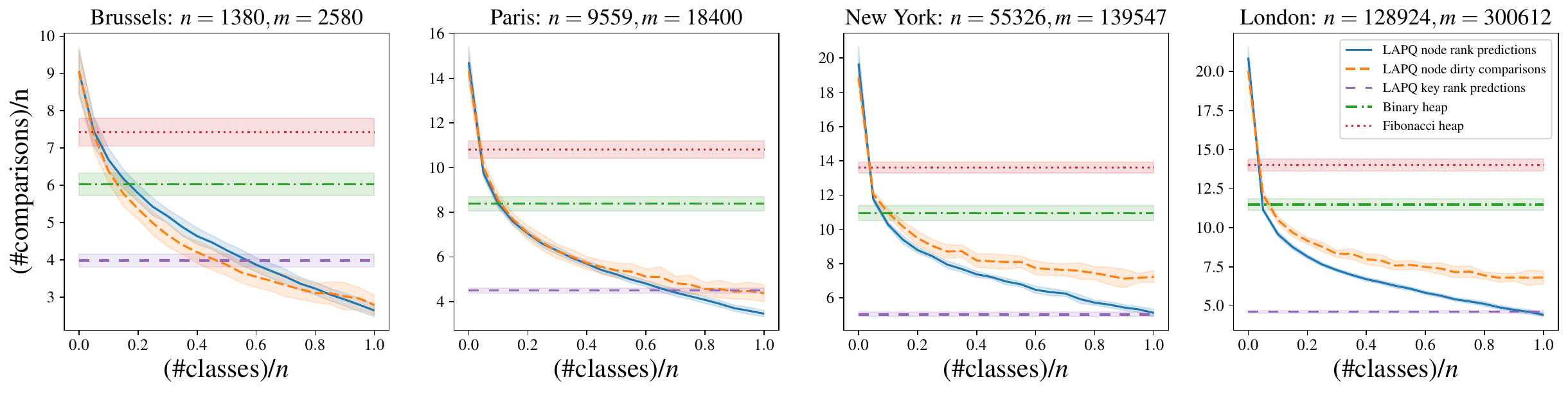}
    \caption{Dijkstra's algorithm on city maps with class predictions}
    \label{fig:dij-class-cities}
\end{figure}

\begin{figure}[h!]
    \centering
    \includegraphics[width=\textwidth]{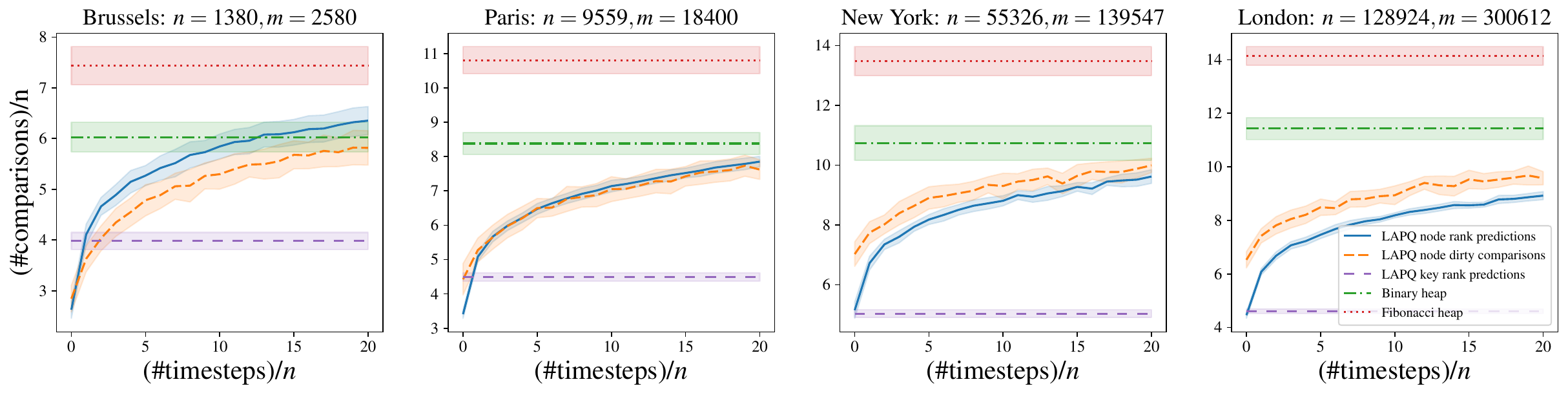}
    \caption{Dijkstra's algorithm on city maps with decay predictions}
    \label{fig:dij-decay-cities}
\end{figure}

In both settings, the performance of the LAPQ substantially improves with the quality of the predictions, and notably, \textit{key rank predictions} yield almost the same performance as perfect \textit{node rank predictions}, affirming our intuition on the similarity between the keys inserted in runs of Dijkstra's algorithm starting from different sources.

\paragraph{Simulations on Poisson Voronoi Tesselations} 
We also evaluate the performance of Dijkstra's algorithm with our LAPQ on synthetic random graphs. 

\begin{figure}[h!]
\centering
\begin{minipage}{0.65\textwidth}
For this, we use Poisson Voronoi Tessellations (PVT), which are a commonly used random graph model for street systems \cite{gloaguen2006fitting, gloaguen2018cost, benomar22agent, benomar2022multi}.

PVTs are random graphs created by sampling an integer $N$ from a Poisson distribution with parameter $n \geq 1$. Subsequently, $N$ points, termed "seeds," are uniformly chosen at random within a two-dimensional region $I$, typically $[0,1]^2$. A Voronoi diagram is then generated based on these seeds.
This process results in a planar graph where edges represent the boundaries between the cells of the Voronoi diagram, and the nodes are their intersections. For $I = [0,1]^2$, the expected number of nodes in this construction is $n$.
Figure \ref{fig:PVT} provides a visualization of a $\text{PVT}$ with $n = 100$.
  \end{minipage}\hfill
  \begin{minipage}{0.35\textwidth}
    \centering
    \includegraphics[width=0.72\textwidth]{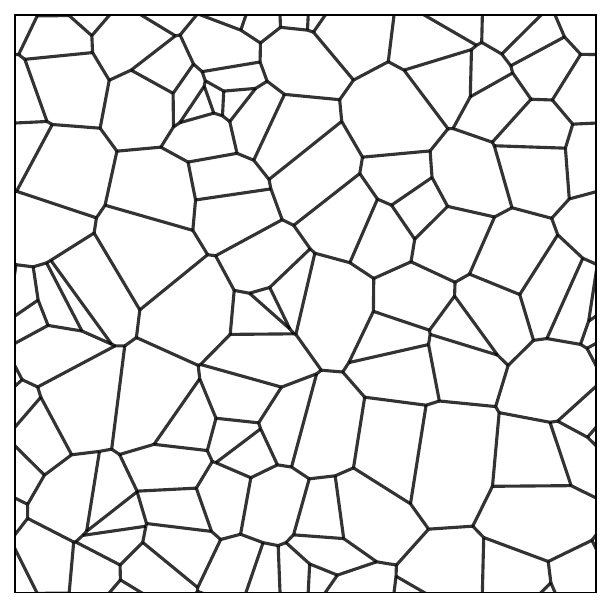}
    \caption{PVT with $n=100$}
    \label{fig:PVT}
  \end{minipage}
\end{figure}

We present the results in the class and the decay settings respectively in Figures \ref{fig:PVT-class} and \ref{fig:PCT-decay}. Similar to previous experiments with Dijkstra on city maps, these figures illustrate how the number of comparisons decreases when the LAPQ is augmented with \textit{node rank} predictions or with the corresponding dirty comparator. We compare them with the number of comparisons induced by using a binary or Fibonacci heap, as well as with the number of comparisons of the LAPQ augmented with \textit{key rank predictions}.

\begin{figure}[h!]
    \centering
    \includegraphics[width=\textwidth]{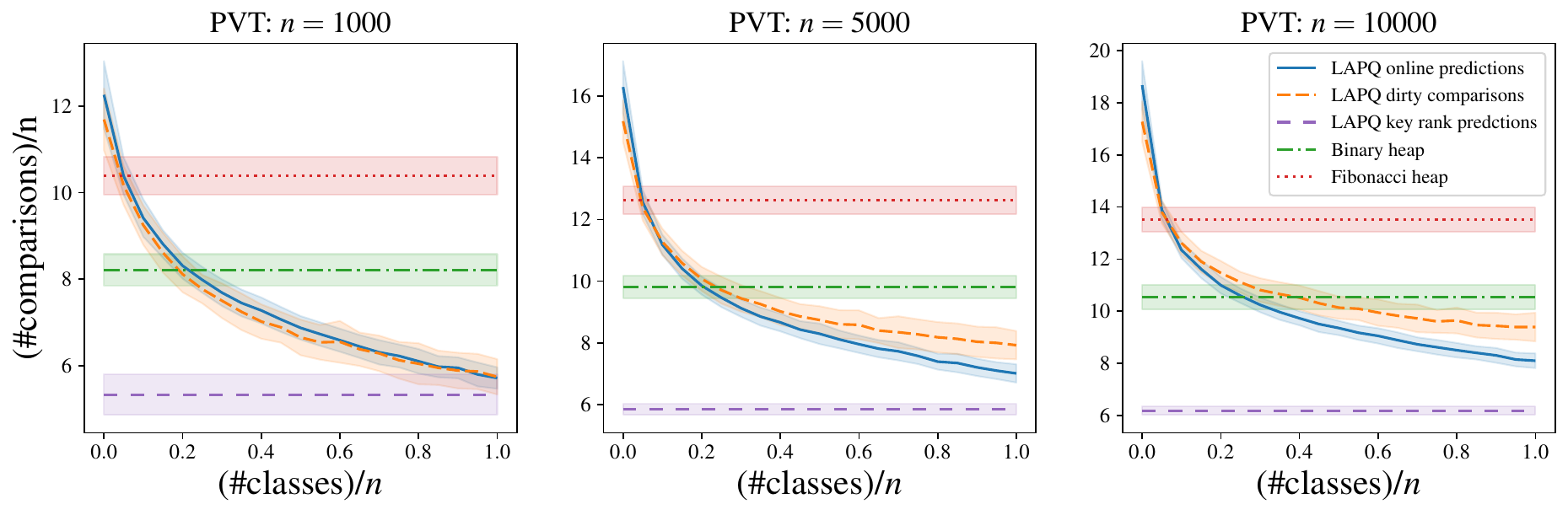}
    \caption{Dijkstra's algorithm on Poisson Voronoi Tesselation with class predictions}
    \label{fig:PVT-class}
\end{figure}

\begin{figure}[h!]
    \centering
    \includegraphics[width=\textwidth]{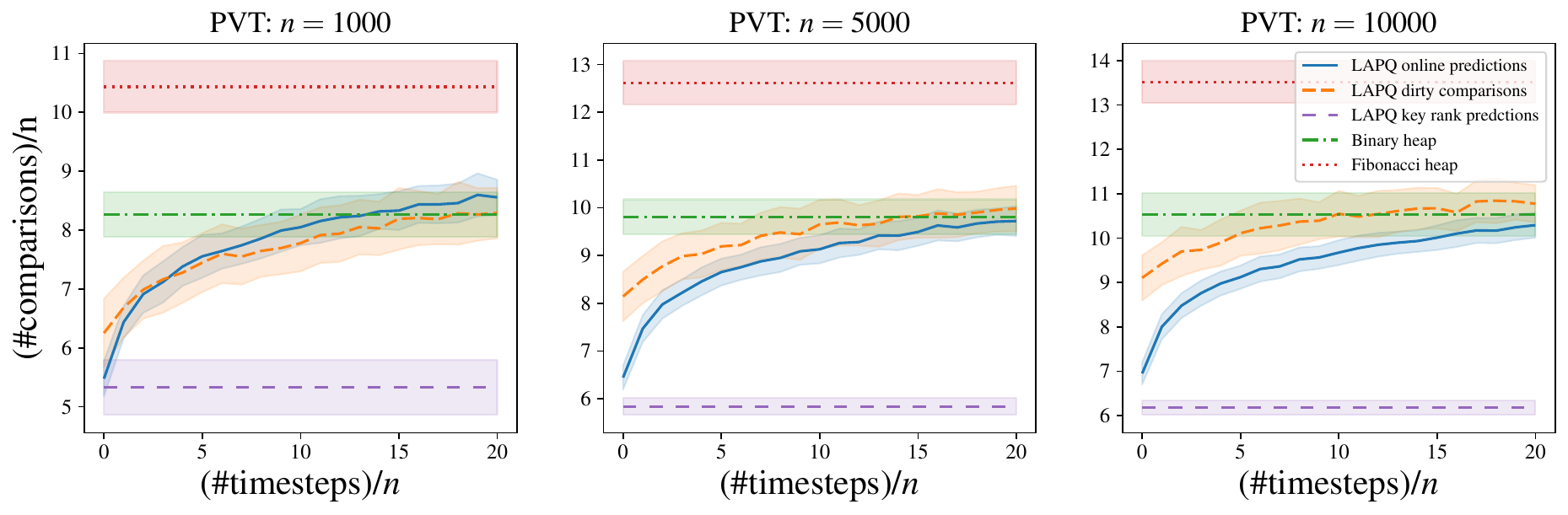}
    \caption{Dijkstra's algorithm on Poisson Voronoi Tesselation with decay predictions}
    \label{fig:PCT-decay}
\end{figure}

The same observations regarding the performance improvement can be made, as in the previous experiments with city maps.
However, in PVT tessellations, the performance of the LAPQ with key rank predictions surpasses even that of perfect node rank predictions. This is due to the PVTs having a more uniform structure across space.

\bibliographystyle{plainnat}
\bibliography{bibliography}

\end{document}